\definecolor{ddgreen}{rgb}{.05,.4,.05}
\definecolor{damethyst}{rgb}{0.4, 0.2, 0.6}
\newtheorem{theorem}{Theorem}
\newtheorem{lem}{Lemma}
\newtheorem{conjecture}{Conjecture}
\newtheorem{remark}[theorem]{Remark}
\newcommand{\Tr}[1]{\mbox{$\mathrm{Tr}\Big[#1\Big]$}}
\newcommand{\beq}{\begin{eqnarray}}
\newcommand{\Eeq}{\end{eqnarray}}
\newcommand{\sfA}{\mathsf{A}}
\newcommand{\sfB}{\mathsf{B}}
\newcommand{\sfC}{\mathsf{C}}
\newcommand{\sfABC}{\mathsf{ABC}}
\DeclareMathOperator*{\Exp}{\mathbb{E}}
\definecolor{amethyst}{rgb}{0.6, 0.4, 0.8}
\title{The curious case of ``XOR repetition'' \\of monogamy-of-entanglement games
}
\author{
\makebox[\textwidth][c]{%
  Andrea Coladangelo\thanks{Paul G.\ Allen School of Computer Science \& Engineering, University of Washington. Email: \texttt{coladan@cs.washington.edu}} \quad
  Qipeng Liu\thanks{Computer Science and Engineering, UC San Diego. Email: \texttt{qipengliu@ucsd.edu}} \quad
  Ziyi Xie\thanks{Department of Computer Science and Technology, Tsinghua University. Email: \texttt{xie-zy21@mails.tsinghua.edu.cn}}
}}
\date{}
\begin{document}

\maketitle
\begin{abstract}
In this work, we consider ``decision'' variants of a well-known monogamy-of-entanglement game by Tomamichel, Fehr, Kaniewski, and Wehner [New Journal of Physics '13].
In its original ``search'' variant, Alice prepares a (possibly entangled) state on registers $\mathsf{ABC}$; register $\mathsf{A}$, consisting of $n$ qubits, is sent to a Referee, while $\mathsf{B}$ and $\mathsf{C}$ are sent to Bob and Charlie; the Referee then measures each qubit in the standard or Hadamard basis (chosen uniformly at random). The basis choices are sent to Bob and Charlie, whose goal is to simultaneously guess the Referee's $n$-bit measurement outcome string $x$. Tomamichel et al.\ show that the optimal winning probability is $\cos^{2n} {(\frac{\pi}{8})}$, following a perfect parallel repetition theorem. We consider the following ``decision'' variants of this game:
\begin{itemize}
\item Variant 1, ``XOR repetition'': Bob and Charlie's goal is to guess the XOR of all the bits of $x$. Ananth et al.\ [Asiacrypt '24] conjectured that the optimal advantage over random guessing decays exponentially in $n$. Surprisingly, we show that this conjecture is false, and, in fact, there is no decay at all: there exists a strategy that wins with probability $\cos^2{(\frac{\pi}{8})} \approx 0.85$ for any $n$. Moreover, this strategy does not involve any entanglement between Alice, Bob, and Charlie!
\item Variant 2, ``Goldreich-Levin'': The Referee additionally samples a uniformly random $n$-bit string $r$ that is sent to Bob and Charlie along with the basis choices. Their goal is to guess the parity of $r\cdot x$. We show that the optimal advantage over random guessing decays exponentially in $n$ for the restricted class of adversaries that do not share entanglement. A similar result was already shown by Champion et al.\ and {\c{C}}akan et al.; we give a more direct proof. 
\end{itemize}
Showing that Variant 2 is ``secure'' (i.e.,\ that the optimal winning probability is exponentially close to $\frac12$) against general adversaries would imply the existence of an information-theoretically ``unclonable bit''. We put forward a reasonably concrete conjecture that is equivalent to the general security of Variant 2.
\end{abstract}

\tableofcontents

\newpage 

\section{Introduction}
Monogamy of entanglement is a fundamental property of quantum information that captures the inability of a quantum system to be maximally entangled with two systems at once. This property is often studied through the framework of \emph{monogamy-of-entanglement games} (or monogamy games, for short), in which it manifests as a non-trivial upper bound on the winning probability. Monogamy games have found many applications throughout quantum cryptography, including some of the first proofs of security of quantum key distribution (via uncertainty relations)~\cite{koashi2006unconditional, tomamichel2017largely}, as well as in position verification~\cite{tomamichel2013monogamy}, unclonable encryption~\cite{broadbent2019uncloneable}, and copy-protection~\cite{coladangelo2021hidden, culf2022monogamy}.
In a monogamy game, Alice prepares a state $\rho_{\mathsf{ABC}}$, where register $\mathsf{A}$ consists of some fixed number $n$ of qubits, while $\mathsf{B}$ and $\mathsf{C}$ can be of arbitrary dimension; she sends $\mathsf{A}$ to the Referee, and $\mathsf{B}$ and $\mathsf{C}$ to her teammates Bob and Charlie respectively; the Referee performs a measurement on $\mathsf{A}$, sampled from a known family of possible measurements, obtaining some outcome $x$; the Referee reveals the measurement choice to both Bob and Charlie, whose goal is to simultaneously guess the measurement outcome $x$ (without communicating with each other).

In the simplest monogamy game, Alice sends a single qubit to the Referee, who measures it in either the standard or the Hadamard basis, chosen uniformly at random. It turns out that the optimal strategy for this game does not require any entanglement across the three registers: Alice simply sends the state $\cos{\frac{\pi}{8}}\ket{0} + \sin{\frac{\pi}{8}}\ket{1}$ to the Referee, and Bob and Charlie deterministically guess $0$. It is easy to see that this strategy achieves a winning probability of $\cos^2(\frac{\pi}{8})$.

In \cite{tomamichel2013monogamy}, Tomamichel et al.\ consider the parallel repetition of this game, i.e.\ $\mathsf{A}$ is an $n$-qubit register, the referee samples measurement bases independently and uniformly at random for each qubit, and Bob and Charlie are required to guess the entire $n$-bit outcome string. They show that the optimal winning probability follows a perfect parallel repetition theorem, i.e. it not only decays exponentially in $n$, but it is exactly $\cos^{2n}(\frac{\pi}{8})$. This result has found several applications throughout quantum cryptography.

Despite the importance of monogamy games, we still lack the answer to some seemingly basic questions. One example is the following. The monogamy game from Tomamichel et al.~\cite{tomamichel2013monogamy} can be thought of as a \emph{search}-based game, where the goal is to guess the entirety of the measurement outcome string $x$. Does a ``search-to-decision'' reduction exist? In other words, is there a way to transform a search-based monogamy game with exponentially small winning probability into a \emph{decision}-based game, where Alice and Bob need to guess only a \emph{single} bit, in such a way that the advantage is still exponentially small (over $\frac12$)? The answer to this question is not known. In fact, it is not known whether a decision-based game with this property exists at all. Thus, in this work, we are motivated by the following question:
\begin{center}
\emph{Does there exist a decision-based monogamy-of-entanglement game where the optimal\\winning probability is exponentially close to $\frac12$ (in the number of qubits $n$)?\\} 
\end{center}
This state of affairs is, in a way, made even more unsatisfying by the fact that we know of candidates that plausibly satisfy the desired property, but we are missing a proof. An affirmative answer to this question would immediately imply the existence of an ``unclonable bit'': an information-theoretically secure unclonable encryption scheme for a single bit. More precisely, the latter notion, put forward by Broadbent and Lord~\cite{broadbent2019uncloneable}, refers to a (private-key) encryption scheme with quantum ciphertexts that satisfies the following property: it is impossible for an adversary to process a ciphertext into two parts, each of which allows recovery of the plaintext given the secret key. Existing unclonable encryption schemes either satisfy a weaker search-based notion of security~\cite{broadbent2019uncloneable, ananth2021unclonable}, or are only proven secure in the random oracle model~\cite{ananth2022feasibility,ananth2023cloning}. Finding a provably-secure unclonable encryption scheme that satisfies the gold standard of ``indistinguishability''-based security in the plain model remains elusive. Botteron et al.~\cite{botteron2024towards} propose a candidate that they prove has exponentially decaying advantage up to certain constant values of $n$ based on the NPA hierarchy.
In exciting recent work~\cite{bhattacharyya2025uncloneable}, Bhattacharyya and Culf get close to the goal, by describing a scheme that provably achieves \emph{inverse-polynomial} indistinguishability-based security (just short of the desired negligible security). They do this by describing a ``decision''-based monogamy game where the optimal winning probability is inverse-polynomially close to $\frac12$. Given how fundamental the primitive of an ``unclonable bit'' appears to be, its connection to monogamy games, and the number of attempts in recent years, the problem of constructing it has arguably earned its place among the most interesting open questions in quantum cryptography.

\subsection{Our contributions}
We consider the following ``decision'' variants of the \cite{tomamichel2013monogamy} monogamy-of-entanglement game, and show the following results.

\noindent \emph{Variant 1} (``XOR repetition''): This is analogous to the game described at the start, except that Bob and Charlie's goal is to guess the \emph{XOR} of all the bits of $x$ (rather than the entire string $x$). A bit more precisely, Alice prepares a state $\rho_{\mathsf{ABC}}$, where register $\mathsf{A}$ consists of $n$ qubits; she sends $\mathsf{A}$ to the Referee, and $\mathsf{B}$ and $\mathsf{C}$ to Bob and Charlie respectively; the Referee measures each of the $n$ qubits in the standard or Hadamard basis (chosen independently and uniformly at random) - denote this choice by a string $\theta \in \{0,1\}^n$; Bob and Charlie, upon receiving $\theta$, must guess the \emph{XOR} of all the bits of $x$. 

The trivial strategy, where Bob and Charlie guess the same uniformly random bit, achieves a winning probability of $\frac12$. Ananth et al.~\cite{ananth2025unclonable} conjectured that the optimal advantage over $\frac12$ decays exponentially in $n$.\footnote{This was also the guess of one of the authors of \cite{tomamichel2013monogamy} in an email correspondence over five years ago.} Surprisingly, we show that there is no decay at all: there exists a strategy that wins with probability $\cos^2{(\frac{\pi}{8})} \approx 0.85$ for any $n$. Moreover, this strategy does not involve any entanglement between Alice, Bob, and Charlie! The strategy takes the following simple form: Alice sends a $n$-qubit pure state $\ket{\psi}$ to the Referee, and sends nothing to Bob and Charlie; upon receiving $\theta$, Bob and Charlie return a deterministic guess (which is a fixed function of $\theta$). We find this quite remarkable: there exists a fixed $n$-qubit state $\ket{\psi}$ (in fact, as we will see, an orthonormal basis of states of this kind) such that, no matter what the string of basis choices $\theta$ is, the XOR of the $n$ measurement outcomes is ``$\cos^2(\frac{\pi}{8})$-biased'' towards either $0$ or $1$ (which of the two values the XOR is biased towards may depend on $\theta$). 
\begin{theorem}[informal]
\label{thm:attack}
For any $n$, the optimal winning probability in the ``XOR repetition'' game (Variant 1) is $\cos^2(\frac{\pi}{8})$. Moreover, there is an optimal strategy where $\rho_{\mathsf{ABC}}$ is such that $\mathsf{A}$ is unentangled with $\mathsf{BC}$.
\end{theorem}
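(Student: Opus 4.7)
The plan is to prove matching bounds of $\cos^2(\pi/8)$ on the winning probability, via a reduction for the upper bound and an explicit product strategy for the lower bound.

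For the upper bound I would reduce the $n$-qubit XOR-repetition game to the single-qubit monogamy game of \cite{tomamichel2013monogamy} with no loss in winning probability. Given any strategy $\rho_{\sfABC}$ with $\sfA$ of $n$ qubits, Alice samples $\theta_2,\dots,\theta_n\in\{0,1\}$ uniformly and measures $\sfA_i$ in basis $\theta_i$ for $i\ge 2$ to obtain outcomes $x_2,\dots,x_n$; she packs the classical data $(\theta_2,x_2,\dots,\theta_n,x_n)$ into both the $\sfB$ and $\sfC$ registers and sends only $\sfA_1$ to the Referee. When the Referee forwards its basis choice $\theta_1$, each of Bob and Charlie internally assembles $\theta=(\theta_1,\dots,\theta_n)$, runs the XOR-game strategy to obtain a guess $b$ for $\bigoplus_i x_i$, and returns $b\oplus x_2\oplus\cdots\oplus x_n$. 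Since projective measurements on disjoint registers commute, the joint distribution of $(\theta,x,\sfB,\sfC)$ coincides with that of the XOR game, and the two winning conditions are equivalent (both say $b_B=b_C=\bigoplus_i x_i$). The upper bound then follows from the $n=1$ case of \cite{tomamichel2013monogamy}.

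For the lower bound it suffices to exhibit an $n$-qubit pure state $|\psi\rangle$ with $|\langle\psi|W_\theta|\psi\rangle|=1/\sqrt{2}$ for every $\theta\in\{0,1\}^n$, where $W_\theta:=O_{\theta_1}\otimes\cdots\otimes O_{\theta_n}$ with $O_0=Z$ and $O_1=X$. Given such $|\psi\rangle$, Alice sends it to the Referee and trivial product states to Bob and Charlie, who each deterministically output $g(\theta)\in\{0,1\}$ defined by $\langle\psi|W_\theta|\psi\rangle=(-1)^{g(\theta)}/\sqrt{2}$; the winning probability is $\tfrac12\bigl(1+\tfrac{1}{\sqrt{2}}\bigr)=\cos^2(\pi/8)$, with $\sfA$ manifestly unentangled from $\mathsf{BC}$. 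I construct $|\psi\rangle:=(I^{\otimes(n-1)}\otimes R_Y(\pi/4))|\phi\rangle$, where $|\phi\rangle$ is a stabilizer state whose stabilizer group, intersected with $\{Z,X\}^{\otimes n}$, equals (with appropriate signs) the set $S_n$ of Paulis with even $X$-count (size $2^{n-1}$). Using $R_Y(\pi/4)^\dagger Z R_Y(\pi/4)=\tfrac{1}{\sqrt{2}}(Z-X)$ and $R_Y(\pi/4)^\dagger X R_Y(\pi/4)=\tfrac{1}{\sqrt{2}}(X+Z)$, for $\theta=(\theta',\theta_n)$ one computes
\[
\langle\psi|W_\theta|\psi\rangle \;=\; \tfrac{1}{\sqrt{2}}\bigl(\langle W_{\theta'}\otimes Z\rangle_\phi \;\pm\; \langle W_{\theta'}\otimes X\rangle_\phi\bigr).
\]
The parity of $|\theta'|$ determines which of $W_{\theta'}\otimes Z$ and $W_{\theta'}\otimes X$ lies in $S_n$: exactly one does and contributes $\pm 1$, while the other is outside the stabilizer and contributes $0$, giving $|\langle\psi|W_\theta|\psi\rangle|=1/\sqrt{2}$ for every $\theta$.

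The main obstacle is the construction of $|\phi\rangle$. Two facts need checking: (a) the Paulis in $S_n$ mutually commute, which reduces via the symplectic picture to the observation that two $\{Z,X\}^{\otimes n}$-strings with $X$-supports $a,a'\in\mathbb{F}_2^n$ commute iff $|a|+|a'|\equiv 0\pmod 2$ (automatic when both are even); and (b) a sign assignment exists making the resulting $2^n$-element abelian subgroup a genuine stabilizer group, i.e., one that does not contain $-I$. For (b) I would take the explicit generators $g_i:=X_iX_n\prod_{j\neq i,n}Z_j$ for $i<n$ together with $g_n:=Z^{\otimes n}$, verify that their symplectic vectors are $\mathbb{F}_2$-independent in $\mathbb{F}_2^{2n}$ (no nonempty subset of the corresponding vectors $(a_i,\bar{a_i})$ sums to zero), and conclude that the group they generate with all $+$ signs is free of $-I$, since independence forces every non-trivial product to be a distinct non-identity Pauli. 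The ``orthonormal basis'' remark in the theorem then follows by flipping signs on these $n$ generators in each of $2^n$ ways, producing $2^n$ mutually orthogonal stabilizer states $|\phi\rangle$ and hence an orthonormal basis of corresponding $|\psi\rangle$'s, each satisfying $|\langle W_\theta\rangle|=1/\sqrt{2}$.
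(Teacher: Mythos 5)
Your upper bound (reducing $\mathsf{XORMonogamy}(n)$ to the single-qubit game by having Alice measure qubits $2,\dots,n$ herself and ship the classical corrections to Bob and Charlie) is exactly the argument in the paper's Section 4.2. Your lower bound, however, is a genuinely different construction and argument. The paper takes $\ket{\psi}=\tfrac{1}{\sqrt{2}}\ket{0^Y}^{\otimes n}+\tfrac{1}{\sqrt{2}}e^{i\pi/4}\ket{1^Y}^{\otimes n}$ (more generally $\ket{\varphi_y},\ket{\psi_y}$) and observes that the two-dimensional subspace $S=\mathrm{span}\{\ket{y^Y},\ket{\bar y^Y}\}$ is invariant under every observable $\widetilde X_\theta$; under the obvious isometry $S\cong\mathbb{C}^2$, $\widetilde X_\theta|_S$ becomes $\pm X$ or $\pm Z$ depending on $|\theta|\bmod 4$ and $y\cdot\theta$, and $\ket{\psi}$ becomes the single-qubit optimal state, so the bias $1/\sqrt 2$ is read off directly from the $n=1$ case. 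You instead present the optimal state as $(I^{\otimes n-1}\otimes R_Y(\pi/4))\ket\phi$ for a stabilizer state $\ket\phi$ (with generators $X_iX_n\prod_{j\neq i,n}Z_j$ and $Z^{\otimes n}$) whose stabilizer intersects $\pm\{Z,X\}^{\otimes n}$ in exactly the even-$X$-weight Paulis, then use the conjugation identities $R_Y^\dagger Z R_Y=\tfrac1{\sqrt2}(Z-X)$, $R_Y^\dagger X R_Y=\tfrac1{\sqrt2}(X+Z)$ to write $\langle W_\theta\rangle_\psi$ as $\tfrac1{\sqrt2}$ times a sum of two stabilizer expectations of which exactly one is $\pm1$ and the other $0$. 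Both are correct; indeed for small $n$ the two states agree up to single-qubit Clifford relabelings. The paper's 2D-invariant-subspace picture is arguably more compact and gives explicitly (Eq.~\eqref{eq:parity_outcome}) which parity each $\theta$ biases toward, whereas your stabilizer presentation makes the underlying group-theoretic structure transparent and would generalize naturally to other weight-parity variants. Your side claims (mutual commutation via even $X$-weights, independence of the $n$ symplectic vectors, absence of $-I$, and the $2^n$ sign choices yielding an orthonormal basis) all check out; note that you recover one of the two orthonormal bases the paper identifies, which is all the theorem requires.
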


\noindent \emph{Variant 2} (``Goldreich-Levin style''): This is analogous to the above, except that the Referee additionally samples a uniformly random $n$-bit string $r$ that is sent to Bob and Charlie along with the basis choices. Their goal is to guess the parity of $r\cdot x$. 

We show that the optimal advantage over $\frac12$ decays exponentially for strategies that do not involve any entanglement between Alice, Bob, and Charlie. A similar result was already shown in~\cite{champion2024untelegraphable, ccakan2024unbounded}, but we give a more direct proof.\footnote{Technically, \cite{champion2024untelegraphable, ccakan2024unbounded} consider the closely-related notion of ''unclonability'' (or ''no-telegraphing'') games rather than monogamy games. A proof of security for a monogamy game implies security of the corresponding ''unclonability'' game. While it is possible the reverse is also true, this equivalence is not known.}
\begin{theorem}[informal]
For any $n$, the optimal winning probability in the ``Goldreich-Levin style'' game (Variant 2) is at most $\frac12 + 0.93^n$ for strategies where $\rho_{\mathsf{ABC}}$ is such that $\mathsf{A}$ is unentangled with $\mathsf{BC}$.
\end{theorem}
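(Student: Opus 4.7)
The plan is to use the assumption that $\sfA$ is unentangled with $\mathsf{BC}$---so $\rho_{\sfABC} = \rho_\sfA \otimes \sigma_{\mathsf{BC}}$---to reduce the analysis to the Fourier properties of the single distribution $p_\theta(x)$ of the Referee's measurement outcome. The key consequence of this assumption is that $x$ is independent (as a random variable) of Bob's and Charlie's outputs, so the conditional winning probability at fixed $(\theta,r)$ will be controlled by a single Fourier coefficient $\hat p_\theta(r) := \sum_x p_\theta(x)(-1)^{r\cdot x}$. After averaging via Cauchy--Schwarz, Parseval, and Jensen, the only other ingredient needed is the original (search-based) monogamy bound of Tomamichel et al.

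First I would fix $(\theta, r)$ and let $q_{\theta, r}(b,c) := \tr[(B_{\theta,r}^{b}\otimes C_{\theta,r}^{c})\sigma_{\mathsf{BC}}]$ be the joint distribution of Bob's and Charlie's bits. Using $\Pr[r\cdot x=y\mid\theta] = \tfrac12(1+(-1)^y\hat p_\theta(r))$ and the independence of $x$ from $(b,c)$, a direct computation yields
\[\Pr[\text{win}\mid \theta,r]\;=\;\frac{q_{\theta,r}(0,0)+q_{\theta,r}(1,1)}{2}\;+\;\hat p_\theta(r)\cdot\frac{q_{\theta,r}(0,0)-q_{\theta,r}(1,1)}{2}\;\le\;\frac{1+|\hat p_\theta(r)|}{2},\]
where the final inequality uses $|q(0,0)-q(1,1)|\le q(0,0)+q(1,1)\le 1$. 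Next I would average over $r$: by Cauchy--Schwarz and Parseval, $\E_r|\hat p_\theta(r)| \le \sqrt{\E_r \hat p_\theta(r)^2} = \sqrt{\sum_x p_\theta(x)^2}\le \sqrt{\max_x p_\theta(x)}$, and Jensen over $\theta$ then gives $\E_\theta\E_r|\hat p_\theta(r)|\le\sqrt{\E_\theta \max_x p_\theta(x)}$.

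Finally, $\E_\theta\max_x p_\theta(x)$ is exactly the winning probability of the search-based monogamy game when Bob and Charlie deterministically output the common guess $\arg\max_x p_\theta(x)$, so by the Tomamichel et al.\ theorem it is at most $\cos^{2n}(\tfrac{\pi}{8})$. Putting everything together, $\Pr[\text{win}]\le \tfrac12+\tfrac12\cos^n(\tfrac{\pi}{8})\le \tfrac12+0.93^n$, using $\cos(\tfrac{\pi}{8})\approx 0.924<0.93$. The main substantive step is the first one: it is the tensor-product form of $\rho_{\sfABC}$ that lets the conditional winning probability factor into ``$x$-dependent'' and ``$(b,c)$-dependent'' pieces and thus be controlled by a single Fourier coefficient. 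The hard part in extending this to general entangled strategies---the correlation between $x$ and $(b,c)$ caused by entanglement between $\sfA$ and $\mathsf{BC}$ breaks exactly this factorization---is precisely the content of the unclonable-bit question alluded to earlier.
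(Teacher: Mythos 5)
Your proof is correct and takes a genuinely different---and arguably cleaner---route than the paper's. You observe that when $\sfA$ is unentangled with $\sfB\sfC$, the measurement outcome $x$ is independent of Bob's and Charlie's responses, so the conditional winning probability at fixed $(\theta,r)$ is controlled by a single Fourier coefficient $\hat p_\theta(r)$ of the Referee's outcome distribution; you then bound its average via Cauchy--Schwarz, Parseval, and Jensen, and hand off the remaining quantity $\Exp_\theta\max_x p_\theta(x)$ to the Tomamichel et al.\ search bound as a black box. The paper instead fixes a deterministic ``semi-classical'' responder $c(\theta,r)$ (which, as you implicitly use, is WLOG optimal given unentangled $\sfA$), rewrites the game value as $\bra{v}W^{(1)}\ket{v}+\bra{v}W^{(2)}\ket{v}$ for two explicit operators, and bounds $\|W^{(2)}\|$ via a Hilbert--Schmidt-type operator-norm lemma (Lemma~\ref{lem:operator-bound-1}) followed by Parseval and a direct binomial sum; it also conditions on a good set $S$ of $(\theta,r)$ pairs to make Lemma~\ref{lem:sum-XOR} applicable, incurring a small $\delta(n)$ correction that your approach avoids. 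Both routes arrive at exactly the same quantitative bound $\tfrac12+\tfrac12\cos^n(\tfrac{\pi}{8})$ (the paper's $((\sqrt2+1)^2+1)/8$ is exactly $\cos^2(\tfrac{\pi}{8})$). The trade-off is that the paper's $W^{(1)}+W^{(2)}$ decomposition is set up to hold for fully entangled strategies, which is what yields Conjecture~\ref{conj:1}; your key first step---that the conditional win probability factors through $q_{\theta,r}$ and $\hat p_\theta(r)$---is precisely what breaks with entanglement, as you correctly note, so your argument does not expose the ``hard core'' statement for the general case in the same way.
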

As a byproduct of our more direct analysis, we arrive at a somewhat concrete conjecture (Conjecture~\ref{conj:1}) that is equivalent to an exponential decay even against general strategies. For further details, see \Cref{sec:conj_imply_UE}.

\subsection*{Acknowledgements}
The authors thank Jiahui Liu, Er-Cheng Tang, Kabir Tomer, Dakshita Khurana, and James Hulett for helpful conversations.



\section{Notation}
\begin{itemize}
\item For strings $x,\theta \in \{0,1\}^n$, we denote
$\ket{x^{\theta}} = \bigotimes_{i \in [n]} \ket{x_i^{\theta_i}}$, where $\ket{0^{\theta_i}}$ and $\ket{1^{\theta_i}}$ are respectively the $+1$ and $-1$ eigenvectors of Pauli $Z$ (if $\theta_i = 0$) and of Pauli $X$ (if $\theta_i = 1$).

\item We write $\ket{0^Y} = \frac{1}{\sqrt{2}} \ket{0} + \frac{i}{\sqrt{2}} \ket{1}$ and $\ket{1^Y} =\frac{1}{\sqrt{2}}  \ket{0} -\frac{i}{\sqrt{2}} \ket{1}$. These are respectively the $\pm 1$ eigenvectors of Pauli $Y$. For a string $y \in \{0,1\}^n$, we use the notation 
$\ket{y^Y} = \ket{(y_1)^Y}\otimes \cdots \otimes \ket{(y_n)^Y}$. Moreover, we write $\bar{y}$ to denote the string obtained by flipping all of the bits of $y$.

\item For $x \in \{0,1\}^n$, we write $\textnormal{parity}(x)$ to denote the XOR of all the bits of $x$.
\end{itemize}

\section{The ``XOR repetition'' monogamy-of-entanglement game}

We consider the following game $\mathsf{XORMonogamy}(n)$, parametrized by some $n \in \mathbb{N}$.

\begin{itemize}
\item Alice prepares a state $\rho_{\sfABC}$ on registers $\sfABC$, where $\sfA$ is an $n$-qubit register, and $\sf{B}$ and $\sfC$ are of arbitrary dimension. Alice sends $\sfA$ to the referee, and $\sfB$ and $\sfC$ to Bob and Charlie respectively.
\item Referee samples $\theta \leftarrow \{0,1\}^n$, and measures in the basis $\{ \ket{x^{\theta}}\}_{x \in \{0,1\}^n}$.
\item Bob and Charlie both receive $\theta$. They also respectively receive registers $\sfB$ and $\sfC$. Bob and Charlie output bits $b$ and $b'$ respectively (by performing local operations on their registers).
\item The game is won if $b = b' = \textnormal{parity}(x)$, where the latter denotes the XOR of all the bits of $x$.
\end{itemize}

\paragraph{Optimal strategy for $n=1$.} When $n=1$, there is a well-known strategy achieving a winning probability of $\cos^2(\frac{\pi}{8}) \approx 0.85$: Alice sends to the referee the single-qubit state $\ket{\phi} = \cos (\frac{\pi}{8})\ket{0} + \sin(\frac{\pi}{8}) \ket{1}$, and nothing to Bob and Charlie (i.e.\ $\sfB, \sfC$ are empty). It is straightforward to see that, regardless of whether the Referee measures in the standard or Hadamard basis, the probability of outcome $0$ is $\cos^2(\frac{\pi}{8})$. Thus, Bob and Charlie can always guess $0$, and win with probability $\cos^2(\frac{\pi}{8})$. This strategy turns out to be optimal. 

There are three more optimal strategies that do not use entanglement, corresponding to the states $\ket{\psi} = \cos(\alpha) \ket{0} + \sin(\alpha) \ket{1}$ for $\alpha = \frac{3\pi}{8}, \frac{5\pi}{8}$, and $-\frac{\pi}{8}$. For each of these states, and for each choice of basis $\theta \in \{0,1\}$, there is one outcome that occurs with probability $\cos^2(\frac{\pi}{8})$. Thus, Bob and Charlie, who receive $\theta$, can again win with probability $\cos^2(\frac{\pi}{8})$ by guessing that outcome.

\subsection{Strategies achieving $\cos^2(\frac{\pi}{8})$ for general $n$}
\label{sec:attack}
In this section and the next, we prove Theorem~\ref{thm:attack}, namely that the optimal winning probability (over any quantum strategies) in the game $\mathsf{XORMonogamy}(n)$ is $\cos^2(\frac{\pi}{8})$ for any $n$, and that the optimal strategy is in fact semi-classical (i.e.\ Bob and Charlie are unentangled with Alice, and they answer deterministically). We start by giving an informal description of the optimal strategy, and why it achieves a winning probability of $\cos^2(\frac{\pi}{8})$.
We then describe the family of optimal strategies formally in Theorem~\ref{thm:attack-formal}, and provide a formal proof that this family achieves $\cos^2(\frac{\pi}{8})$. We argue optimality in Section~\ref{sec:optimality}.
 
One might naturally expect an exponential decay of the advantage over random guessing. Surprisingly, there is no decay at all: the optimal strategy achieves the same winning probability as in the $n=1$ case (i.e.\ $\cos^2(\frac{\pi}{8})$). The starting point to understand the optimal strategy is to notice first that, up to a global phase, the optimal single-qubit state $\cos{(\frac{\pi}{8})} \ket{0} + \sin{(\frac{\pi}{8})}\ket{1}$ is equal to the state $$ \frac{1}{\sqrt{2}} \ket{0^Y}+ \frac{1}{\sqrt{2}}e^{i\frac{\pi}{4}} \ket{1^Y} \,,
$$
where recall that $\ket{0^Y} = \frac{1}{\sqrt{2}} \ket{0} + \frac{i}{\sqrt{2}} \ket{1}$ and $\ket{1^Y} =\frac{1}{\sqrt{2}}  \ket{0} -\frac{i}{\sqrt{2}} \ket{1}$ are respectively the $\pm 1$ eigenvectors of Pauli $Y$. Now, consider the following $n$-qubit generalization of this state: 
$$\ket{\psi} = \frac{1}{\sqrt{2}} \ket{0^Y}^{\otimes n}+ \frac{1}{\sqrt{2}}e^{i\frac{\pi}{4}} \ket{1^Y}^{\otimes n}\,.$$
The claim is that $\ket{\psi}$ is optimal. Note that this means that $\ket{\psi}$ has quite a remarkable property: for all basis choices $\theta \in \{0,1\}^n$, the XOR of the $n$ measurement outcomes is ``$\cos^2(\frac{\pi}{8})$-biased'' towards either $0$ or $1$ (depending on $\theta$), i.e.\ there is one particular value that the XOR takes with probability $\cos^2(\frac{\pi}{8})$. 

A bit informally, the following is the reason why the claim is true. For any $\theta \in \{0,1\}^n$, let $X_{\theta}$ denote the observable corresponding to taking the XOR of the outcomes from a measurement in the basis $\theta$. Then, consider the two-dimensional subspace 
$$ S = \textnormal{span}\{\ket{0^Y}^{\otimes n}, \ket{1^Y}^{\otimes n} \} \,.$$
Observe crucially that $S$ is invariant under the action of $X_{\theta}$ (this is because the latter is a tensor product of single-qubit Pauli $X$ and $Z$ - so applying $X_\theta$ flips $\ket{0^Y}^{\otimes n}$ to $\ket{1^Y}^{\otimes n}$ and viceversa, up to a global phase). Now consider the isomorphism between $S$ and $\mathbb{C}^2$ such that $\ket{0^Y}^{\otimes n} \mapsto \ket{0^Y}$ and $\ket{1^Y}^{\otimes n} \mapsto \ket{1^Y}$. Then, this isomorphism must map $X_{\theta}$ to either $X$ or $Z$, since $X_{\theta}$ is the observable that flips $\ket{0^Y}^{\otimes n}$ to $\ket{1^Y}^{\otimes n}$, and vice versa, up to a phase. Thus, overall, the isomorphism maps $\ket{\psi}$ to the optimal single-qubit state $\cos (\frac{\pi}{8})\ket{0} + \sin(\frac{\pi}{8}) \ket{1}$, and it maps $X_{\theta}$ to either $X$ or $Z$, i.e.\ to either a standard or Hadamard basis measurement, both of which are $\cos^{2}(\frac{\pi}{8})$-biased. It turns out that there is not only one optimal state $\ket{\psi}$, but $2^{n+1}$ such states, which form \emph{two} orthonormal bases.

Recall that, for a string $y \in \{0,1\}^n$, we use the notation 
$\ket{y^Y} = \ket{(y_1)^Y}\otimes \cdots \otimes \ket{(y_n)^Y}$ (where $\ket{0^Y}$ and $\ket{1^Y}$ are the eigenvectors of Pauli $Y$), and we write $\bar{y}$ to denote the string obtained by flipping all of the bits of $y$. Moreover, recall that $\textnormal{parity}(x)$ denotes the XOR of all the bits of $x$.

\begin{theorem}[Optimal strategies for the XOR monogamy-of-entanglement game]
    \label{thm:attack-formal} 
    For any $y \in \{0,1\}^n$, let $$\ket{\varphi_y}=\frac{1}{\sqrt{2}}\ket{y^Y} + \frac{1}{\sqrt{2}}e^{i\frac{\pi}{4}}\ket{\bar{y}^Y}\,,$$
where the notation is defined right before this theorem.
Then, for all $\theta \in \{0,1\}^n$, the following holds: with probability $\cos^{2}(\frac{\pi}{8}) \approx 0.853$ over performing a measurement of $\ket{\varphi_y}$ in the $\theta$ basis, the outcome string $x$ is such that $\textnormal{parity}(x)$ is
    \begin{equation}
    \label{eq:parity_outcome}
        \bigg\{\begin{array}{ccc}
            y\cdot \theta &\mod{2} & \text{if } |\theta| = 0,1 \mod{4}\\
            y\cdot \theta + 1 &\mod{2} & \text{if } |\theta| = 2,3 \mod{4}
        \end{array}.
    \end{equation}
Moreover, let 
$$\ket{\psi_y}=\frac{1}{\sqrt{2}}\ket{y^Y} - \frac{1}{\sqrt{2}}e^{i\frac{\pi}{4}}\ket{\bar{y}^Y} \,.$$
Then, for all $\theta \in \{0,1\}^n$, the following holds: with probability $\cos^{2}(\frac{\pi}{8}) \approx 0.853$ over performing a measurement of $\ket{\varphi_y}$ in the $\theta$ basis, the outcome string $x$ is such that $\textnormal{parity}(x)$ is
\begin{equation}
        \bigg\{\begin{array}{ccc}
            y\cdot \theta &\mod{2} & \text{if } |\theta| = 0,3 \mod{4}\\
            y\cdot \theta + 1 &\mod{2} & \text{if } |\theta| = 1,2 \mod{4}
        \end{array}.
    \end{equation}
\end{theorem}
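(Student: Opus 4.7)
The plan is to reduce the $n$-qubit probability computation to a two-dimensional expectation-value calculation by exploiting that both $\ket{\varphi_y}$ and $\ket{\psi_y}$ lie inside the two-dimensional subspace $S_y := \mathrm{span}\{\ket{y^Y}, \ket{\bar y^Y}\}$.

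First, I would introduce the parity observable $P_\theta := \bigotimes_{i=1}^n \sigma_{\theta_i}$, where $\sigma_0 := Z$ and $\sigma_1 := X$. Since $P_\theta \ket{x^\theta} = (-1)^{\textnormal{parity}(x)} \ket{x^\theta}$, the desired probability can be extracted from an expectation value via $\Pr[\textnormal{parity}(x) = p] = \tfrac{1}{2}\bigl(1 + (-1)^p \bra{\varphi_y} P_\theta \ket{\varphi_y}\bigr)$, and analogously for $\ket{\psi_y}$.

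Next, direct one-qubit computations give $Z\ket{b^Y} = \ket{\bar b^Y}$ and $X\ket{b^Y} = i\,(-1)^b\ket{\bar b^Y}$. Tensoring these and using $\theta \cdot \bar y \equiv |\theta| - \theta \cdot y \pmod 2$, I obtain
\[
P_\theta \ket{y^Y} = i^{|\theta|}(-1)^{\theta \cdot y}\,\ket{\bar y^Y}, \qquad P_\theta \ket{\bar y^Y} = (-i)^{|\theta|}(-1)^{\theta \cdot y}\,\ket{y^Y}.
\]
This shows in particular that $P_\theta$ preserves $S_y$ and acts as an off-diagonal $2\times 2$ matrix on it. Substituting into the definition of $\ket{\varphi_y}$ and applying Euler's formula yields
\[
\bra{\varphi_y} P_\theta \ket{\varphi_y} = \tfrac{(-1)^{\theta \cdot y}}{2}\bigl(e^{i\pi/4}(-i)^{|\theta|} + e^{-i\pi/4} i^{|\theta|}\bigr) = (-1)^{\theta \cdot y}\cos\!\bigl(\tfrac{\pi}{4} - \tfrac{|\theta|\pi}{2}\bigr).
\]
Since $\cos(\pi/4 - |\theta|\pi/2) = \pm 1/\sqrt{2}$ with sign determined by $|\theta| \bmod 4$, and $\tfrac{1}{2}(1 + 1/\sqrt{2}) = \cos^2(\pi/8)$, the claim follows from a case check across the four residues of $|\theta| \bmod 4$ combined with the two values of $\theta \cdot y \bmod 2$. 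The computation for $\ket{\psi_y}$ is identical except that $e^{i\pi/4}$ is replaced by $-e^{i\pi/4}$, which flips the overall sign of the expectation and hence flips the preferred parity in every case.

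The main obstacle is pure bookkeeping of two sets of fourth roots of unity: the phase $i^{|\theta|}$ coming from the action of $P_\theta$ and the phases $e^{\pm i\pi/4}$ from the definitions of $\ket{\varphi_y}$ and $\ket{\psi_y}$. Organizing the computation via the identity $e^{i\pi/4}(-i)^{|\theta|} + e^{-i\pi/4} i^{|\theta|} = 2\cos(\pi/4 - |\theta|\pi/2)$ collapses the case analysis into a single trigonometric expression and should minimize the risk of sign errors.
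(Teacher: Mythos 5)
Your proof is correct and follows essentially the same route as the paper: both show that the parity observable preserves the two-dimensional subspace $\textnormal{span}\{\ket{y^Y},\ket{\bar y^Y}\}$ and reduce the $n$-qubit probability to a single-qubit expectation-value computation. The only difference is stylistic: the paper conjugates $\widetilde{X}_\theta$ by the isometry sending $\ket{y^Y}\mapsto\ket{0^Y}$, $\ket{\bar y^Y}\mapsto\ket{1^Y}$ and then runs an explicit four-way case analysis on $|\theta|\bmod 4$, whereas you collapse those cases into the single identity $e^{i\pi/4}(-i)^{|\theta|}+e^{-i\pi/4}i^{|\theta|}=2\cos(\pi/4-|\theta|\pi/2)$, which is a slightly tidier way to organize the same bookkeeping.

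One thing worth flagging: your derivation correctly shows that $\bra{\psi_y}P_\theta\ket{\psi_y}=-\bra{\varphi_y}P_\theta\ket{\varphi_y}$, so the preferred parity for $\ket{\psi_y}$ is the bit-flip of the one for $\ket{\varphi_y}$ in every case, namely $y\cdot\theta+1$ for $|\theta|\equiv 0,1\pmod 4$ and $y\cdot\theta$ for $|\theta|\equiv 2,3\pmod 4$. This does not match the case split printed in the theorem for $\ket{\psi_y}$, which groups $|\theta|\equiv 0,3$ against $|\theta|\equiv 1,2$; the printed split appears to be a typo in the paper. Your version is the one consistent with the paper's own remark that the isometry maps $\ket{\psi_y}$ to (up to global phase) $\sin(\pi/8)\ket{0}-\cos(\pi/8)\ket{1}$, a state for which both the $Z$ and $X$ expectation values equal $-1/\sqrt 2$, so the sign (and hence the preferred parity) flips uniformly across all four residues of $|\theta|\bmod 4$. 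A quick $n=1$, $y=0$, $\theta=0$ sanity check confirms your version.
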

\begin{proof}
We will show first that the above holds for the states $\ket{\varphi_{y}} = \frac{1}{\sqrt{2}}\ket{y^Y} + \frac{1}{\sqrt{2}}e^{i\frac{\pi}{4}}\ket{\bar{y}^Y}$. The proof for the states $\ket{\psi_y}=\frac{1}{\sqrt{2}}\ket{y^Y} - \frac{1}{\sqrt{2}}e^{i\frac{\pi}{4}}\ket{\bar{y}^Y}$ is similar.

Define $\widetilde{Y} = (-1)^{y_1} Y \otimes I^{\otimes n-1}$, where $y_1$ is the first bit of $y$. Then, for any $\theta \in \{0,1\}^{n}$, define $\widetilde{X}_{\theta} = W_{\theta_1} \otimes \cdots \otimes W_{\theta_n}$, where $$W_{\theta_i} = \begin{cases}
Z \textnormal{ if } \theta_i = 0 \\
X \textnormal{ if } \theta_i = 1
\end{cases}\,.$$

Notice that observable $X_{\theta}$ ``measures'' exactly the parity of the outcome string resulting from a measurement in the basis $\theta$. This is because the eigenvectors of $X_{\theta}$ are precisely the states $\ket{x^{\theta}}$, with eigenvalue $+1$ if $\textnormal{parity}(x)=0$, and $-1$ if $\textnormal{parity}(x)=1$. Thus, for any state $\ket{\psi}$, 
$\Tr{X_{\theta} \ket{\psi}\bra{\psi}}$ is the expectation of $\textnormal{parity}(x)$, where $x$ is the outcome obtained from measuring $\ket{\psi}$ in the basis $\theta$.

Now, 
consider the two-dimensional subspace 
$$ S = \textnormal{span}\{\ket{y^Y}, \ket{\bar{y}^Y} \} \,.$$
Notice that $S$ is invariant under both $\widetilde{Y}$ and $\widetilde{X}_{\theta}$. This is because $\ket{y^Y}$, $\ket{\bar{y}^Y}$ are respectively $+1$ and $-1$ eigenvectors of $\widetilde{Y}$, and moreover 
\begin{align}
\widetilde{X}_{\theta} \ket{y^Y} &= (-1)^{y\cdot \theta} \cdot i^{|\theta|} \ket{\bar{y}^Y} \label{eq:101}\\
\widetilde{X}_{\theta} \ket{\bar{y}^Y} &= (-1)^{\bar{y}\cdot \theta} \cdot i^{|\theta|}  \ket{y^Y} \,, \label{eq:102}
\end{align}
where the phase arises from the different action that the (single-qubit) $Z$ and $X$ have on the eigenvectors of~$Y$.

We now consider the restrictions of $\widetilde{Y}$ and $\widetilde{X}_{\theta}$ to subspace $S$. Denote these by $\widetilde{Y}|_S$ and $\widetilde{X}_{\theta}|_S$. Since $\ket{\varphi_{y}} = \frac{1}{\sqrt{2}}\ket{y^Y} + \frac{1}{\sqrt{2}}e^{i\frac{\pi}{4}}\ket{\bar{y}^Y} \in S$, we have
\begin{equation}
\Tr{\widetilde{X}_{\theta} \ket{\varphi_y}\bra{\varphi_y}} =  \Tr{\widetilde{X}_{\theta}|_S \ket{\varphi_y}\bra{\varphi_y}} \,. \label{eq:1000}
\end{equation}
Consider the isomorphism $U: S \rightarrow \mathbb{C}^2$ such that $U \ket{y^Y} = \ket{0^Y}$ and $U \ket{\bar{y}^Y} = \ket{1^Y}$ (this is an isomorphism since it preserves inner products). Then, \begin{align}
U \ket{\varphi_y} &= \frac{1}{\sqrt{2}} \ket{0^Y} + \frac{1}{\sqrt{2}} e^{i\frac{\pi}{4}} \ket{1^Y}, \textnormal{ and} \label{eq:1001}\\
U \widetilde{Y}|_S U^{\dagger} &= Y  \,, \label{eq:1002}
\end{align}
where the latter follows from the fact that, by the definition of $\tilde{Y}$, $\ket{y^Y}$ and $\ket{\bar{y}^Y}$ are respectively the $+1$ and $-1$ eigenvectors of $\tilde{Y}|_S$. 
Now, what is $U \widetilde{X}_{\theta}|_S U^{\dagger}$? This depends on the parity of $y\cdot \theta$ and the value of $|\theta|\mod{4}$.
\begin{itemize}
\item When $|\theta| = 0 \mod{4}$, Equations \eqref{eq:101} and \eqref{eq:102} become
\begin{align*}
\widetilde{X}_{\theta} \ket{y^Y} &= (-1)^{y\cdot \theta}\ket{\bar{y}^Y} \\
\widetilde{X}_{\theta} \ket{\bar{y}^Y} &= (-1)^{\bar{y}\cdot \theta} \ket{y^Y} =(-1)^{y\cdot \theta} \ket{y^Y}   \,,
\end{align*}
where the last equality is because $\bar{y} \cdot \theta = y\cdot \theta + 1^n \cdot \theta = y \cdot \theta$, since $|\theta|= 0 \mod 4$ (and hence also mod~2).
This implies that $U \widetilde{X}_{\theta}|_S U^{\dagger} = (-1)^{y\cdot \theta} Z$, since $Z$ is the unique linear operator mapping $\ket{0^Y}$ to $\ket{1^Y}$, and $\ket{1^Y}$ to $\ket{0^Y}$. Combining this with \eqref{eq:1000} and \eqref{eq:1001}, this implies
\begin{align}
\Tr{\widetilde{X}_{\theta} \ket{\varphi_{y}}\bra{\varphi_{y}}} &=  \Tr{\widetilde{X}_{\theta}|_S \ket{\varphi_{y}}\bra{\varphi_{y}}} \\
&= (-1)^{y\cdot \theta}\Tr{Z \ket{\phi}\bra{\phi}} \,,
\end{align}
where $\ket{\phi} = \frac{1}{\sqrt{2}} \ket{0^Y} + \frac{1}{\sqrt{2}} e^{i\frac{\pi}{4}} \ket{1^Y}$. Thus, measuring observable 
on $\widetilde{X}_{\theta}$ on $\ket{\varphi_{y}}$ gives identical statistics as measuring observable $(-1)^{y\cdot \theta}Z$ on $\ket{\phi}$. Crucially, notice that, up to a global phase, 
$\ket{\phi} = \cos{\frac{\pi}{8}} \ket{0} + \sin{\frac{\pi}{8}} \ket{1}$, which is the state that yields an optimal attack in the single-qubit monogamy game. Thus, when performing a measurement in the basis $\theta$ on state $\ket{\varphi_{y}}$, the probability that $\textnormal{parity}(x) = 0$, where $x$ is the measurement outcome, is precisely
$$ \frac{1 + \Tr{\widetilde{X}_{\theta} \ket{\varphi_{y}}\bra{\varphi_{y}}}}{2} = \frac{1 + (-1)^{y\cdot \theta}\Tr{Z \ket{\phi}\bra{\phi}} }{2} \,.$$
The latter is equal to $\cos^2{\frac{\pi}{8}}$ when $y\cdot \theta = 0 \mod{2}$, and $1-\cos^2{\frac{\pi}{8}}$ when $y\cdot \theta = 1 \mod{2}$.

\item When $|\theta| = 1 \mod{4}$, Equations \eqref{eq:101} and \eqref{eq:102} become \begin{align*}
\widetilde{X}_{\theta} \ket{y^Y} &= (-1)^{y \cdot \theta} i  \ket{\bar{y}^Y} \\
\widetilde{X}_{\theta} \ket{\bar{y}^Y} &= -(-1)^{y \cdot \theta} i \ket{y^Y} \,.
\end{align*}
This implies that $U \widetilde{X}_{\theta}|_S U^{\dagger} = (-1)^{y \cdot \theta} X$. Thus, the probability that $\textnormal{parity}(x) = 0$ is 
$$ \frac{1 + \Tr{\widetilde{X}_{\theta} \ket{\varphi_{y}}\bra{\varphi_{y}}}}{2} = \frac{1 + (-1)^{y \cdot \theta}\Tr{X \ket{\phi}\bra{\phi}} }{2} \,.$$
The latter is equal to $\cos^2{\frac{\pi}{8}}$ when $y\cdot \theta = 0 \mod{2}$, and $1-\cos^2{\frac{\pi}{8}}$ when $y\cdot \theta = 1 \mod{2}$.
\item When $|\theta| = 2 \mod{4}$, Equations \eqref{eq:101} and \eqref{eq:102} become \begin{align*}
\widetilde{X}_{\theta} \ket{y^Y} &= (-1)^{y\cdot \theta + 1}\ket{\bar{y}^Y} \\
\widetilde{X}_{\theta} \ket{\bar{y}^Y} &= (-1)^{y\cdot \theta + 1} \ket{y^Y}   \,.
\end{align*}
This implies that $U \widetilde{X}_{\theta}|_S U^{\dagger} = (-1)^{y\cdot \theta + 1}Z$. Thus, the probability that $\textnormal{parity}(x) = 0$ is
$$ \frac{1 + \Tr{\widetilde{X}_{\theta} \ket{\varphi_{y}}\bra{\varphi_{y}}}}{2} = \frac{1 + (-1)^{y \cdot \theta + 1}\Tr{Z \ket{\phi}\bra{\phi}} }{2} \,.$$
The latter is equal to $\cos^2{\frac{\pi}{8}}$ when $y\cdot \theta = 1 \mod{2}$, and $1-\cos^2{\frac{\pi}{8}}$ when $y\cdot \theta = 0 \mod{2}$.
\item When $|\theta| = 3 \mod{4}$, Equations \eqref{eq:101} and \eqref{eq:102} become \begin{align*}
\widetilde{X}_{\theta} \ket{y^Y} &= (-1)^{y \cdot \theta + 1} i  \ket{\bar{y}^Y} \\
\widetilde{X}_{\theta} \ket{\bar{y}^Y} &= -(-1)^{y \cdot \theta + 1} i \ket{y^Y} \,.
\end{align*}
This implies that $U \widetilde{X}_{\theta}|_S U^{\dagger} = (-1)^{y\cdot \theta + 1}X$. Thus, the probability that $\textnormal{parity}(x) = 0$ is
$$ \frac{1 + \Tr{\widetilde{X}_{\theta} \ket{\varphi_{y}}\bra{\varphi_{y}}}}{2} = \frac{1 + (-1)^{y\cdot \theta + 1}\Tr{X \ket{\phi}\bra{\phi}} }{2}  \,.$$
The latter is equal to $\cos^2{\frac{\pi}{8}}$ when $y\cdot \theta = 1 \mod{2}$, and $1-\cos^2{\frac{\pi}{8}}$ when $y\cdot \theta = 0 \mod{2}$.
\end{itemize}
Putting things together, since the above holds for any $\theta,y \in \{0,1\}^n$, we get the statement of the theorem. Note that in this argument it is crucial that the operator $\widetilde{Y}$ is \emph{independent} of $\theta$, and so it is the same state $\ket{\varphi_{y}}$ that satisfies the desired property for all $\theta$.

The statement for the states $\ket{\psi_y}=\frac{1}{\sqrt{2}}\ket{y^Y} - \frac{1}{\sqrt{2}}e^{i\frac{\pi}{4}}\ket{\bar{y}^Y}$ is proven similarly, with the difference that the same isomorphism $U$ (defined right before Equation~\ref{eq:1001}) now maps $\ket{\psi_y}$ to $\frac{1}{\sqrt{2}} \ket{0^Y} - \frac{1}{\sqrt{2}} e^{i\frac{\pi}{4}} \ket{1^Y}$, which is equal to $\sin{\frac{\pi}{8}} \ket{0} - \cos{\frac{\pi}{8}} \ket{1}$ up to a global phase. The latter is another optimal state for a single-qubit strategy.
\end{proof}
Notice that the $2^{n+1}$ states in $\{\ket{\phi_y}\}$ and $\{\ket{\psi_y}\}$ form \emph{two} orthonormal bases, each comprising half of the states from each family.

\begin{remark}
Note that the families of strategies described here cannot be lifted to the game where the challenger additionally samples a uniformly random $r \in \{0,1\}^n$ (which is revealed to Bob and Charlie together with $\theta$), and Bob and Charlie have to guess $r \cdot \theta$. The step in the above argument that one cannot get to work is to argue that the subspace $S$ is invariant under $\widetilde{X}_{\theta}$. This is because one now needs to consider the operator $\widetilde{X}_{\theta, r}$, defined analogously to $\widetilde{X}_{\theta}$ except that it acts trivially at all locations where $r$ is $0$. However, notice that now $\widetilde{X}_{\theta, r} \ket{y^Y} = \ket{(y\oplus r)^Y}$, up to a global phase. Since $r$ can be any string, one cannot find a fixed 2-dimensional subspace that is invariant under all $\widetilde{X}_{\theta, r}$ and $\widetilde{Y}$. In fact, as we prove in Theorem~\ref{thm:main}, the optimal winning probability at this game is exponentially close to $\frac12$ (for semi-classical strategies).
\end{remark}

\begin{remark}
There is a symmetry that allows one to derive the ``optimal states'' $\{\ket{\phi_y}\}$ and $\{\ket{\psi_y}\}$ assuming one just knows that $\ket{\varphi_{0^n}}$ is optimal. The symmetry is the following: for any $a,b \in \{0,1\}^n$, the set of states $\{\ket{x^{\theta}} : x, \theta\in \{0,1\}^n\}$ is invariant, up to global phases, under the action of $X^a Z^b$. Thus, since the state $\ket{\varphi_{0^n}}$ satisfies the property of Theorem~\ref{thm:attack-formal}, 
$X^aZ^b\ket{\varphi_{0^n}}$ must also satisfy it, up to modifying the function that predicts the most likely parity outcome. A straightforward calculation then shows that the most likely parity outcome is as in \eqref{eq:parity_outcome}.
\end{remark}

\begin{remark}
The $2^{n+1}$ optimal states $\{\ket{\phi_y}\}$ and $\{\ket{\psi_y}\}$ are actually the only ones. While we do not provide a proof here, this can be shown with some work starting from the expressions derived in Sections~\ref{sec:analysis-first-part} and~\ref{sec:analysis-unentangled} (specialized to the case of $r=1^n$ and semi-classical strategies).
\end{remark}

\subsection{Proof of optimality}
\label{sec:optimality}
The fact that no strategy (even fully quantum) can achieve a higher winning probability than $\cos^2(\frac{\pi}{8})$ in $\mathsf{XORMonogamy}(n)$, for any $n$, follows straightforwardly from the fact that $\cos^2(\frac{\pi}{8})$ is optimal for $\mathsf{XORMonogamy}(1)$, i.e.\ the original game of \cite{tomamichel2013monogamy} when $n=1$ (note that when $n=1$, guessing the parity of a string is equivalent to guessing the entire string so the two games coincide). 

Formally, suppose for a contradiction there was a strategy $S$ for $\mathsf{XORMonogamy}(n)$ winning with probability $p > \cos^2(\frac{\pi}{8})$. Let $\rho_{\mathsf{ABC}}$ be the state used by this strategy, where $\mathsf{A}$ is an $n$-qubit register. We will construct a strategy for $\mathsf{XORMonogamy}(1)$ winning with the same probability $p$. The strategy is simple. In short, Alice creates the state $\rho_{\mathsf{ABC}}$ for strategy $S$, and sends only the first qubit of $\mathsf{A}$ to the referee; she simulates the actions of the referee on the remaining $n-1$ qubits. More formally:
\begin{itemize}
\item Alice creates $\rho_{ABC}$ (the state used by strategy $S$ for $\mathsf{XORMonogamy}(n)$, where $\mathsf{A}$ is an $n$-qubit register). She sends only the first qubit $\mathsf{A_1}$ to the referee (she keeps the other $n-1$ qubits $A_{2\cdots N}$). For $i \in \{2,\ldots, n\}$, Alice samples $\theta_i \leftarrow \{0,1\}$ uniformly at random and measures register $\mathsf{A_i}$ in the basis corresponding to $\theta_i$. Denote the outcome by $x_i$. Alice sends the concatenations $\theta_2 || \cdots || \theta_n$ and $x_2 || \cdots || x_n$ to both Bob and Charlie, along with registers $\mathsf{B}$ to Bob, and $\mathsf{C}$ to Charlie.
\item The referee for $\mathsf{XORMonogamy}(1)$ samples $\theta_1 \leftarrow \{0,1\}$ uniformly at random, and measures register $\mathsf{A_1}$ in the basis corresponding to $\theta_1$. Denote the outcome by $x_1$.
\item Bob and Charlie receive $\theta_1$ from the referee (as well as $\theta_2 || \cdots || \theta_n$, $x_2 || \cdots || x_n$, and registers $\mathsf{B}$ and $\mathsf{C}$ respectively from Alice). Let $\theta = \theta_1 || \cdots || \theta_n$. Bob measures register $\mathsf{B}$ according to ``Bob'' in strategy $S$ on input $\theta$. Let $b$ be the outcome. Bob returns $\widetilde{b} = b \oplus x_2 \oplus \cdots \oplus x_n$. Similarly, Charlie measures register $\mathsf{C}$ according to ``Charlie'' in strategy $S$ on input $\theta$. Let $b'$ be the outcome. Charlie returns $\widetilde{b}' = b' \oplus x_2 \oplus \cdots \oplus x_n$. 
\end{itemize}
Since strategy $S$ succeeds with probability $p > \cos^2(\frac{\pi}{8})$ in $\mathsf{XORMonogamy}(n)$, the strategy above is such that $b = b' = x_1 \oplus \cdots \oplus x_n$ with probability $p$. This implies that we have $\widetilde{b} = \widetilde{b}' = x_1$ with probability $p$, which is a contradiction.

\section{The ``Goldreich-Levin style'' monogamy-of-entanglement game}
\label{sec:1}
We consider the following game $\mathsf{GLMonogamy}(n)$, parametrized by some $n \in \mathbb{N}$.

\begin{itemize}
\item Alice prepares a state $\rho_{\sfABC}$ on registers $\sfABC$, where $\sfA$ is an $n$-qubit register, and $\sf{B}$ and $\sfC$ are of arbitrary dimension. Alice sends $\sfA$ to the referee. 
\item Referee samples $\theta, r \leftarrow \{0,1\}^n$, and measures in basis $\{ \ket{x^{\theta}}\}_{x \in \{0,1\}^n}$.
\item Bob and Charlie both receive $\theta$ and $r$. They also respectively receive registers $\sfB$ and $\sfC$. Bob and Charlie output bits $b$ and $b'$ respectively.
\item The game is won if $b = b' = r\cdot x \mod{2}$.
\end{itemize}
We now describe the general form of a strategy in $\mathsf{GLMonogamy}(n)$, as well as corresponding expressions for the optimal winning probability.

\paragraph{Fully quantum strategies.} A fully quantum strategy for Alice, Bob, and Charlie is specified by a state $\ket{v}_{\sfABC}$ (this can be taken to be pure without loss of generality), and families of projective measurements 
$$ \Big\{ \{P_0^{\theta, r}, P_1^{\theta, r} \} : \theta, r \in \{0,1\}^n \Big \}$$
on register $\sfB$, and
$$ \Big\{ \{Q_0^{\theta, r}, Q_1^{\theta, r} \} : \theta, r \in \{0,1\}^n \Big \}$$
on register $\sfC$, 
where $\{P_0^{\theta, r}, P_1^{\theta, r} \}$ is the measurement that Bob performs on $\sfB$ upon receiving $\theta, r$, and similarly $\{Q_0^{\theta, r}, Q_1^{\theta, r} \}$ for Charlie on $\sfC$.

\paragraph{Optimal winning probability.} For $\theta, r \in \{0,1\}^n$, define the projection
\begin{equation}
\label{eq:pi}
\Pi^{\theta,r} = \sum_{b} \sum_{x: r\cdot x = b} \ket{x^{\theta}}\bra{x^{\theta}} \otimes P_b^{\theta, r} \otimes Q_b^{\theta, r} \,. 
\end{equation}
Note that, for a strategy specified by a state $\ket{v}_{\sfABC}$, and projections $\{P^{\theta, r}_b\}$ on $\sfB$, and $\{Q^{\theta, r}_b\}$ on $\sfC$, its winning probability is precisely
$$ \Exp_{\theta, r} \| \Pi^{\theta, r} \ket{v}  \|^2  = \bra{v} \Exp_{\theta, r} \Pi^{\theta, r} \ket{v}   \,. $$
Thus, the optimal winning probability in $\mathsf{GLMonogamy}(n)$ is 
\begin{equation}
\label{eq:p_opt}
p_{opt}(n) = \max_{\ket{v}, \{P_{b}^{\theta, r}\}, \{Q_b^{\theta, r} \}} \bra{v} \Exp_{\theta, r} \Pi^{\theta, r} \ket{v} 
= \max_{\{P_{b}^{\theta, r}\}, \{Q_b^{\theta, r} \}} \max_{\ket{v}}   \bra{v} \Exp_{\theta, r} \Pi^{\theta, r} \ket{v}  \,.
\end{equation}

We will at first restrict our attention to ``semi-classical'' strategies, defined as follows.

\paragraph{``Semi-classical'' strategies.} 
Using the notation introduced above to describe general strategies, we say that a strategy is ``semi-classical'' if $\ket{v}_\sfABC = \ket{u}_{\sfA} \otimes \ket{w}_{\sfB\sfC}$, for some $\ket{u}_{\sfA}$ and $\ket{w}_{\sfB\sfC}$, and, moreover, for all $\theta, r$, it holds that
$P^{\theta, r}_0, P^{\theta,r}_1 \in \{0_\sfB,I_\sfB\}$ and $Q^{\theta,r}_0, Q^{\theta,r}_1 \in \{0_\sfC,I_\sfC\}$. In words, Bob and Charlie both return the output of a deterministic function of $\theta$ and $r$. Without loss of generality, we can assume that they answer according to the \emph{same} deterministic function of $\theta$ and $r$ (this is because a strategy where Alice and Bob answer using different functions performs at most as well as the strategy where they both answer according to Alice's function). Therefore, for any ``semi-classical'' strategy, there exists some function $c$ such that 
$$P_b^{\theta,r} = Q_b^{\theta,r} = \delta_{b = c(\theta,r)} \cdot I\,.$$
In this note, we prove the following. 
\begin{theorem}
\label{thm:main}
The optimal winning probability in $\mathsf{GLMonogamy}(n)$ for semi-classical strategies is at most 
$$ \frac12 + 0.93^n \,.$$
\end{theorem}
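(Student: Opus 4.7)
The plan is to reduce the optimal winning probability to a second-moment bound via Cauchy--Schwarz, then establish that second moment by a clean induction on $n$. For a semi-classical strategy, the function $c(\theta, r)$ can be chosen pointwise to maximize the probability of correctly guessing $r \cdot x$, giving
\begin{equation*}
p_{\text{opt}}(n) \;=\; \frac{1}{2} + \frac{1}{2}\Exp_{\theta, r}\bigl|\langle u | \mathcal{O}_{\theta, r} | u\rangle\bigr|,
\end{equation*}
where $\mathcal{O}_{\theta, r}$ is the observable with eigenvectors $\ket{x^\theta}$ and eigenvalues $(-1)^{r\cdot x}$. It factorizes as $\mathcal{O}_{\theta, r} = \bigotimes_i O_i$ with $O_i \in \{I, X, Z\}$, and the induced marginal is $\Pr[O_i = I] = 1/2$, $\Pr[O_i = X] = \Pr[O_i = Z] = 1/4$, independently across qubits. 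Write $\mu^{\otimes n}$ for this product distribution. Cauchy--Schwarz then yields $p_{\text{opt}}(n) \le \tfrac{1}{2} + \tfrac{1}{2}\sqrt{\Exp_{W \sim \mu^{\otimes n}}\langle u | W | u\rangle^2}$.

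The core claim, which I would prove by induction on $n$, is the following \emph{cross-state} strengthening: for any two pure $n$-qubit states $\ket{\phi_0}, \ket{\phi_1}$,
\begin{equation*}
\Exp_{W \sim \mu^{\otimes n}}\bigl|\langle \phi_1 | W | \phi_0\rangle\bigr|^2 \;\le\; (3/4)^n.
\end{equation*}
Setting $\ket{\phi_0} = \ket{\phi_1} = \ket{u}$ recovers the needed bound $\Exp_W \langle u | W | u\rangle^2 \le (3/4)^n$. For the base case $n = 1$, the Pauli completeness identity $\sum_{W \in \{I, X, Y, Z\}}|\langle \phi_1 | W | \phi_0\rangle|^2 = 2$ rewrites the expectation as $\tfrac{1}{2} + \tfrac{1}{4}|\langle \phi_1|\phi_0\rangle|^2 - \tfrac{1}{4}|\langle \phi_1 | Y | \phi_0\rangle|^2 \le 3/4$.

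For the inductive step, decompose both states on the last qubit as $\ket{\phi_k} = \ket{w_0^{(k)}}\ket{0} + \ket{w_1^{(k)}}\ket{1}$ and set $p_i = \|w_i^{(0)}\|^2$, $q_i = \|w_i^{(1)}\|^2$, $M_{ij} = \langle w_i^{(1)} | W_F | w_j^{(0)}\rangle$, where $W = W_F \otimes W_n$. Summing over $W_n \in \{I, X, Z\}$ produces the quadratic form
$\tfrac{3}{4}(|M_{00}|^2 + |M_{11}|^2) + \tfrac{1}{4}(|M_{01}|^2 + |M_{10}|^2) + \tfrac{1}{2}\text{Re}(M_{00}\overline{M_{11}}) + \tfrac{1}{2}\text{Re}(M_{01}\overline{M_{10}})$.
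The inductive hypothesis (applied to the normalized Schmidt vectors) gives $\Exp_{W_F}|M_{ij}|^2 \le p_j q_i (3/4)^n$, and Cauchy--Schwarz on the random variables $|M_{ij}|$ bounds each cross term by $|\Exp_{W_F} \text{Re}(M_{ij}\overline{M_{kl}})| \le \sqrt{p_j q_i p_l q_k}(3/4)^n$. Collecting terms and using $(p_0+p_1)(q_0+q_1) = 1$, the desired bound $(3/4)^{n+1}$ reduces to
\begin{equation*}
p_0 q_0 + p_1 q_1 + 2\sqrt{p_0 p_1 q_0 q_1} \;=\; \bigl(\sqrt{p_0 q_0} + \sqrt{p_1 q_1}\bigr)^2 \;\le\; 1,
\end{equation*}
which is Cauchy--Schwarz applied to the vectors $(\sqrt{p_0}, \sqrt{p_1})$ and $(\sqrt{q_0}, \sqrt{q_1})$. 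This completes the induction. Since $\sqrt{3/4} < 0.93$, we have $\tfrac{1}{2}(3/4)^{n/2} \le 0.93^n$ for every $n \ge 0$, and the theorem follows.

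The main obstacle is identifying the correct inductive statement: a direct induction on the diagonal quantity $\Exp_W\langle u | W | u\rangle^2$ breaks down because the last-qubit decomposition generates off-diagonal matrix elements $\langle u_i | W_F | u_j\rangle$ with $i \ne j$ (arising from the $W_n = X$ contribution) that cannot be controlled without strengthening to pairs $\ket{\phi_0}, \ket{\phi_1}$. Once the cross-state hypothesis is in place, the recursion collapses to two Cauchy--Schwarz applications, and the $(3/4)$ contraction factor emerges cleanly from the identity $(\sqrt{p_0q_0}+\sqrt{p_1q_1})^2 \le 1$.
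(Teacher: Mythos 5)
Your proof is correct, and it takes a genuinely different and more elementary route than the paper. Both proofs begin by reducing the problem to bounding a correlation with the Pauli observable $\mathcal{O}_{\theta,r} = \bigotimes_i O_i$, $O_i \in \{I, X, Z\}$, drawn from the product measure $\mu = (\tfrac12, \tfrac14, \tfrac14)$. The paper's strategy is to rewrite the winning probability as $\bra{v}W^{(1)}\ket{v} + \bra{v}W^{(2)}\ket{v}$, restrict to a high-probability subset $S$ of $(\theta,r)$ pairs to control edge cases, bound $\|W^{(2)}\|$ via a Frobenius-type operator-norm inequality, apply Parseval's identity to each inner sum, and finally evaluate a pair of nested binomial sums, yielding the base $\sqrt{(4+2\sqrt{2})/8} \approx 0.924$. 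You instead go straight to the second moment $\Exp_{W\sim\mu^{\otimes n}}\langle u|W|u\rangle^2$ via a single Cauchy--Schwarz, and then prove the cross-state bound $\Exp_W|\langle\phi_1|W|\phi_0\rangle|^2 \le (3/4)^n$ by induction, with Pauli completeness handling the base case and a second Cauchy--Schwarz plus the identity $(\sqrt{p_0q_0}+\sqrt{p_1q_1})^2 \le 1$ closing the inductive step. Your observation that the diagonal second moment does not self-induct because $W_n = X$ generates off-diagonal terms $\langle w_0|W_F|w_1\rangle$, and that the cross-state strengthening is precisely what makes the recursion close, is the key insight and it is correctly deployed. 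Your approach avoids the set $S$, the $\delta(n)$ bookkeeping, the operator-norm lemma, Parseval, and the binomial-sum calculation altogether, and it yields a strictly better contraction factor: $\tfrac12 + \tfrac12(3/4)^{n/2} = \tfrac12 + \tfrac12(0.866\ldots)^n$ versus the paper's $\tfrac12 + 0.93^n$ (paper's true base $\approx 0.924$). The only blemish is terminological: the vectors $w_i^{(k)}$ in your last-qubit decomposition need not be orthogonal, so calling them ``Schmidt vectors'' is imprecise, but nothing in the argument relies on orthogonality.
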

This theorem is proved in Section \ref{sec:security}.

\section{Security against ``semi-classical'' strategies}
\label{sec:security}
The first part of the proof (\cref{sec:analysis-first-part}) applies to general strategies. In particular, Conjecture \ref{conj:1}, which we state at the end of Subsection \ref{sec:analysis-first-part}, is equivalent to the winning probability in $\mathsf{GLMonogamy}(n)$ being exponentially close to $\frac12$ for all fully quantum strategies.

In the second part (\cref{sec:analysis-unentangled}), we restrict our attention to ``semi-classical'' strategies, and we complete the proof of \cref{thm:main}. The distinction between the game $\mathsf{XORMonogamy}(n)$ (whose optimal winning is $\cos^2(\frac{\pi}{8})$ from Theorem~\ref{thm:attack-formal}) and  $\mathsf{GLMonogamy}(n)$ becomes clear here, as it enables the use of Parseval's identity at the right time to bound the operator norm of a certain operator.

\subsection{Rewriting the optimal winning probability}
\label{sec:analysis-first-part}
From \cref{eq:p_opt}, our goal is to bound 
\begin{equation}
\label{eq:popt}
p_{opt}(n) = \max_{\{P_{b}^{\theta, r}\}, \{Q_b^{\theta, r} \}} \max_{\ket{v}}   \bra{v} \Exp_{\theta, r} \Pi^{\theta, r} \ket{v} \,,
\end{equation}
where $\Pi^{\theta,r} = \sum_{b} \sum_{x: r\cdot x = b} \ket{x^{\theta}}\bra{x^{\theta}} \otimes P_b^{\theta, r} \otimes Q_b^{\theta, r}$, and $\ket{v}$ is a normalized state.

For the rest of this subsection, we will fix some $\{P^{\theta,r}_b\}$ and $\{Q^{\theta,r}_b\}$. Let $\ket{v} = \sum_{i,k} \alpha_{ik} \ket{i}_{\sfA}\ket{k}_{\sfB \sfC}$. Let $\gamma_v = \bra{v} \Exp_{\theta,r} \Pi^{\theta,r} \ket{v}$. Then, 
$$ \gamma_v = \sum_{i,k,j,l} \overline{{\alpha}_{ik}} \cdot \alpha_{jl} \cdot \gamma_{ijkl} \,,$$
where $\gamma_{ijkl} = \bra{i}_{\sfA}\bra{k}_{\sfB \sfC} \Exp_{\theta,r} \Pi^{\theta, r}\ket{j}_{\sfA}\ket{l}_{\sfB \sfC} \,.$
We now proceed to compute simpler expressions for the $\gamma_{ijkl}$. We first introduce some notation.
\paragraph{Notation.} For an operator $A$, we use write $\| A \|$ to denote its operator norm. For a given $\theta \in \{0,1\}^n$, we use the notation $C = \{i: \theta_i = 0\}$, and $H = \{i: \theta_i = 1\}$. For a string $x \in \{0,1\}^n$, and a subset $S \subseteq [n]$, we write $x_S$ to denote the sub-string of $x$ corresponding to indices in $C$.

\vspace{2mm}
Fix some $i,j,k,l$. We have
\begin{align}
\gamma_{ijkl} &= \Exp_{\theta, r} \bra{i}\bra{k} \Pi^{\theta,r}\ket{j}\ket{l} \nonumber\\
&= \Exp_{\theta, r} \sum_b \sum_{x: r\cdot x = b} \bra{i} \ket{x^{\theta}} \bra{x^{\theta}} \ket{j} \cdot \bra{k} P^{\theta,r}_b \otimes Q^{\theta,r}_b \ket{l} \nonumber\\
&= \Exp_{\theta, r} \sum_b \frac{1}{2^{|H|}} \cdot \sum_{\substack{x_C = i_C = j_C \\ x_H: \,\, r_H \cdot x_H = r_C \cdot i_C \oplus b}} (-1)^{x_H \cdot (i_H \oplus j_H)} \cdot \bra{k} P^{\theta,r}_b \otimes Q^{\theta,r}_b\ket{l} \nonumber \\
&= \Exp_{\theta, r} \sum_b \frac{1}{2^{|H|}} \cdot \delta_{i_C = j_C} \cdot \sum_{x_H: \,\, r_H \cdot x_H = r_C \cdot i_C \oplus b} (-1)^{x_H \cdot (i_H \oplus j_H)} \cdot \bra{k} P^{\theta,r}_b \otimes Q^{\theta,r}_b\ket{l} \label{eq:6}
\end{align}
We can simplify the above expression further thanks to the following lemma.
\begin{lem}
\label{lem:sum-XOR}
Let $N \in \mathbb{N}$. Let $r, u \in \{0,1\}^N$ with $r\neq 0^N$, and let $b \in \{0,1\}$. Then,
$$ \sum_{x \in \{0,1\}^N: \,\, r \cdot x = b} (-1)^{x\cdot u} = \begin{cases}
 2^{N-1} \textnormal{ if }  u = 0^N  \\
2^{N-1}(-1)^b \textnormal{ if } u = r \\
0 \quad \textnormal{ otherwise }
\end{cases}
$$
\end{lem}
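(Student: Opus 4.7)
The plan is to collapse the constrained sum into two \emph{unconstrained} character sums and then invoke the standard orthogonality of characters of $\mathbb{F}_2^N$. First I would observe that the indicator of the constraint $r \cdot x = b$ can be written as $\frac{1 + (-1)^{r \cdot x \oplus b}}{2}$, which equals $1$ when $r\cdot x = b$ and $0$ otherwise, so that
\[
\sum_{x:\, r\cdot x = b} (-1)^{x \cdot u} \;=\; \frac{1}{2}\sum_{x \in \{0,1\}^N} (-1)^{x \cdot u} \;+\; \frac{(-1)^b}{2} \sum_{x \in \{0,1\}^N} (-1)^{x \cdot (u \oplus r)}.
\]

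Next I would apply the elementary identity $\sum_{x \in \{0,1\}^N} (-1)^{x \cdot v} = 2^N$ if $v = 0^N$ and $0$ otherwise, which follows by factoring the sum as $\prod_{i=1}^N \sum_{x_i \in \{0,1\}} (-1)^{x_i v_i}$ and noting that each factor is $2$ if $v_i=0$ and $0$ if $v_i=1$. Applied to the two unconstrained sums above, the first contributes $2^{N-1}$ precisely when $u = 0^N$, and the second contributes $(-1)^b \cdot 2^{N-1}$ precisely when $u \oplus r = 0^N$, i.e., when $u = r$. Because the hypothesis $r \neq 0^N$ ensures $0^N \neq r$, the two conditions $u = 0^N$ and $u = r$ are mutually exclusive, so at most one of the two terms survives in any case. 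Adding the contributions then reproduces exactly the three-case formula in the statement.

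There is no genuine obstacle here; this is a routine character-sum calculation on $\mathbb{F}_2^N$. The only subtle point worth flagging in the write-up is the role of the hypothesis $r \neq 0^N$: without it, the cases $u = 0^N$ and $u = r$ would coincide, the two contributions would add rather than being mutually exclusive (yielding a value that depends on $b$), and the clean case split as stated would cease to hold.
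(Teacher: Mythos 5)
Your proof is correct, and it takes a cleaner, more unified route than the paper's. The paper verifies the cases $u=0^N$ and $u=r$ directly, and then handles the remaining case by a coset argument: translating the affine slice $\{x: r\cdot x = b\}$ to the linear subspace $\{x': r\cdot x'=0\}$, observing that the sum survives only if $u$ lies in the orthogonal complement of that $(N-1)$-dimensional subspace, and identifying that complement as $\{0^N, r\}$. You instead write the constraint indicator as $\tfrac{1}{2}\bigl(1 + (-1)^{r\cdot x \oplus b}\bigr)$, which collapses the constrained sum to two unconstrained character sums, and then a single application of the orthogonality relation $\sum_x (-1)^{x\cdot v} = 2^N\,\delta_{v=0^N}$ handles all three cases at once. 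The tradeoff is that the paper's argument makes the dual-subspace structure explicit and thus hints at why exactly $0^N$ and $r$ are the special vectors, whereas your decomposition is shorter, symmetric in the cases, and makes the role of the hypothesis $r\neq 0^N$ (mutual exclusivity of the two surviving terms) transparent -- a point you rightly flag. Both are fully rigorous; yours is the one I would reach for in a write-up where brevity matters.
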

\begin{proof}
The first two cases are easy to verify directly. We prove the last case, namely that $\sum_{x \in \{0,1\}^N: \,\, r \cdot x = b} (-1)^{x\cdot u} = 0$ if $u \notin \{0^N, r\}$. Let $v \in \{0,1\}^N$ be any string such that $r\cdot v = b$. Then, notice that the set $\{x: \,r\cdot x = b\}$ is equal to the set $\{x' + v: \,r\cdot x' =0\}$. Thus, 
\begin{align*}
\sum_{x \in \{0,1\}^N: \,\, r \cdot x = b} (-1)^{x\cdot u} &= \sum_{x' \in \{0,1\}^N: \,\, r \cdot x' = 0} (-1)^{(x'+v) \cdot u} \\
&=(-1)^{v\cdot u} \sum_{x' \in \{0,1\}^N: \,\, r \cdot x' = 0} (-1)^{x'\cdot u} \,. 
\end{align*}
It is straightforward to verify that the latter is non-zero if and only if $u$ is orthogonal to the subspace $S =\{x': r\cdot x' =0\}$. Notice that $S$ is $(N-1)$-dimensional, and so $S^{\perp}$ must be $1$-dimensional. Thus, it is exactly $S^{\perp} = \{0^N, r\}$, as desired.
\end{proof}
Before applying \cref{lem:sum-XOR} to \cref{eq:6}, it will be convenient for us ``restrict'' our analysis to the uniform distribution on $(\theta,r)$ conditioned on $(\theta, r) \in S$, where 
\begin{equation}
\label{eq:S}
S = \{(\theta,r): |H| \geq 1 \,\, \,\land  \,\,\,
r_{H} \neq 0^{|H|} \} \,.
\end{equation}
This is convenient because it helps ensure that the hypothesis of \cref{lem:sum-XOR} is satisfied, and leads to a simpler expression. Now, notice, that $$(\{0,1\}^n \times \{0,1\}^n) \setminus S = \{(\theta,r): |H| = 0 \,\, \lor \,\, (|H| \geq 1 \,\, \land \,\,r_H = 0^H) \} \,.$$
Moreover, 
$$\{(\theta,r): |H| = 0 \,\, \lor \,\, (|H| \geq 1 \,\, \land \,\,r_H = 0^H) \} = \sum_{k=1} \binom{n}{k} = 2^n \,.$$
Thus, 
$$\Pr_{\substack{\theta,r}}[(\theta, r) \in S] \geq \frac{2^{2n} - 2^n}{2^{2n}} = 1 - 2^{-n} \,.$$
So, the restriction to $(\theta,r) \in S$ has only a negligible effect: more precisely, going back to the expression \eqref{eq:popt} that we are trying to bound, notice that, for any state $\ket{v}_{\sfABC}$ (and any $\{ P^{\theta,r}_b\}$, $\{Q^{\theta,r}_b\}$),
\begin{align}
\bra{v} \Exp_{\theta,r} \Pi^{\theta, r} \ket{v}  &= \Pr_{\theta, r}[(\theta,r) \in S] \cdot \bra{v}\Exp_{\substack{\theta,r \\ : (\theta, r) \in S}} \Pi^{\theta, r} \ket{v} \nonumber\\
&\quad+ \Pr_{\theta,r}[(\theta, r) \notin S] \cdot \bra{v} \Exp_{\substack{\theta,r \\ : (\theta, r) \notin S}} \Pi^{\theta, r} \ket{v} \nonumber\\
&\leq \bra{v}\Exp_{\substack{\theta,r \\ : (\theta, r) \in S}} \Pi^{\theta, r} \ket{v} + 2^{-n} \label{eq:bound_1}
\end{align}
(where the last line is by the fact that $\Pi^{\theta,r}$ is a projection and so $\Exp_{\substack{\theta,r \\ : (\theta, r) \notin S}} \Pi^{\theta, r}$ has operator norm at most $1$). Similarly, we have 
\begin{equation}
\label{eq:216}
\bra{v} \Exp_{\theta,r} \Pi^{\theta, r} \ket{v} \geq \bra{v}\Exp_{\substack{\theta,r \\ : (\theta, r) \in S}} \Pi^{\theta, r} \ket{v} - 2^{-n} \,.
\end{equation}
Thus, from here on, we will focus on bounding 
$$ \max_{\ket{v}_{\sfABC}} \,\, \bra{v} \Exp_{\substack{\theta,r \\ : (\theta, r) \in S}} \Pi^{\theta, r} \ket{v}   \,.$$
As before, for most of the rest of the proof, we will fix some arbitrary families of projections $\{P_{b}^{\theta, r}\}$ on $\sfB$ and  $\{Q_b^{\theta, r}\}$ on $\sfC$. 

Now, define
$$\widetilde{\gamma}_{ijkl} =  \bra{i}_{\sfA} \bra{k}_{\sfB\sfC} \Exp_{\substack{\theta,r \\ : (\theta, r) \in S}} \Pi^{\theta, r} \ket{j}_{\sfA}\ket{l}_{\sfB\sfC} \,$$
(i.e.\ $\widetilde{\gamma}_{ijkl}$ is defined analogously as $\gamma_{ijkl}$ except that the expectation is taken over $(\theta,r) \in S$). With an analogous calculation as for $\gamma_{ijkl}$ (just replacing the distribution that we are taking the expectation over), we obtain
\begin{equation*}
\widetilde{\gamma}_{ijkl} = \Exp_{\substack{\theta,r \\ : (\theta, r) \in S}} \sum_b \frac{1}{2^{|H|}} \cdot \delta_{i_C = j_C} \cdot \sum_{x_H: \,\, r_H \cdot x_H = r_C \cdot i_C \oplus b} (-1)^{x_H \cdot (i_H \oplus j_H)} \cdot \bra{k} P^{\theta,r}_b \otimes Q^{\theta,r}_b\ket{l}\,.
\end{equation*}
Applying \cref{lem:sum-XOR}, we get
\begin{align}
\widetilde{\gamma}_{ijkl} &= \Exp_{\substack{\theta,r \\ : (\theta, r) \in S}} \sum_b \frac{1}{2^{|H|}} \cdot 2^{|H|-1} \cdot \delta_{i_C = j_C} \cdot \big(\delta_{i_H = j_H} + \delta_{i_H = j_H \oplus r_H} \cdot (-1)^{r_C \cdot i_C \oplus b} \big) \cdot \bra{k} P^{\theta,r}_b \otimes Q^{\theta,r}_b\ket{l} \nonumber\\
&= \frac12 \Exp_{\substack{\theta,r \\ : (\theta, r) \in S}} \sum_b \big(\delta_{i=j} + \delta_{i=j \oplus (0_Cr_H)} (-1)^{r_C \cdot i_C \oplus b}\big) \cdot \bra{k} P^{\theta,r}_b \otimes Q^{\theta,r}_b\ket{l} \nonumber\\
&= \frac12 \delta_{i = j} \cdot \bra{k} \Exp_{\substack{\theta,r \\ : (\theta, r) \in S}} \sum_b P^{\theta,r}_b \otimes Q^{\theta,r}_b\ket{l} \label{eq:line-6}\\
&\,\,\,\,\,+  \frac12 \Exp_{\substack{\theta,r \\ : (\theta, r) \in S}}\delta_{i=j \oplus (0_Cr_H)} \cdot (-1)^{r_C \cdot i_C} \cdot   \bra{k} \sum_b (-1)^b P^{\theta,r}_b \otimes Q^{\theta,r}_b\ket{l} \label{eq:line-7} \\
&= \widetilde{\gamma}_{ijkl}^{(1)} + \widetilde{\gamma}_{ijkl}^{(2)} \,, \nonumber
\end{align}
where $\widetilde{\gamma}_{ijkl}^{(1)}$ equals the expression in line \eqref{eq:line-6}, and $\widetilde{\gamma}_{ijkl}^{(2)}$ equals the expression in line \eqref{eq:line-7}. 

Now, let $\ket{v} = \sum_{i,k} \alpha_{ik} \ket{i}_{\sfA}\ket{k}_{\sfB \sfC}$. Then, notice that
\begin{equation}
\label{eq:sum}
\bra{v} \Exp_{\substack{\theta,r \\ : (\theta, r) \in S}} \Pi^{\theta,r} \ket{v}  = \sum_{i,k,j,l} \overline{\alpha_{ik}} \cdot \alpha_{jl} \cdot \widetilde{\gamma}_{ijkl}^{(1)} + \sum_{i,k,j,l} \overline{\alpha_{ik}} \cdot \alpha_{jl} \cdot \widetilde{\gamma}_{ijkl}^{(2)} \,.
\end{equation}

Crucially, observe now that 
$$\sum_{i,k,j,l} \overline{\alpha_{ik}} \cdot \alpha_{jl} \cdot \widetilde{\gamma}_{ijkl}^{(1)} = \bra{v} W^{(1)}\ket{v} \,,$$
where 
$$ W^{(1)} = \frac12 I \otimes \Exp_{\theta,r} \sum_b P^{\theta,r}_b \otimes Q^{\theta,r}_b \,. $$
Observe also that 
$$\sum_{i,k,j,l} \overline{\alpha_{ik}} \cdot \alpha_{jl} \cdot \widetilde{\gamma}_{ijkl}^{(2)} = \bra{v} W^{(2)}\ket{v} \,,$$
where 
\begin{equation}
\label{eq:w2}
W^{(2)} = \frac12 \sum_{i,j}\ket{i}\bra{j}  \otimes M_{i,j}\,,
\end{equation}
and 
\begin{equation}
\label{eq:mij}
M_{i,j} = \frac12 \Exp_{\substack{\theta,r \\ : (\theta, r) \in S}}\delta_{i=j \oplus (0_Cr_H)} \cdot (-1)^{r_C \cdot i_C} \cdot \sum_b (-1)^b P^{\theta,r}_b \otimes Q^{\theta,r}_b \,.
\end{equation}

Plugging these into \cref{eq:sum}, we have 
\begin{equation}
\bra{v} \Exp_{\substack{\theta,r \\ : (\theta, r) \in S}} \Pi^{\theta,r} \ket{v} = \bra{v} W^{(1)}\ket{v} + \bra{v} W^{(2)}\ket{v} \,. \label{eq:W}
\end{equation}
Now, recall that we wish to show that 
$$ p_{opt}(n) = \max_{\ket{v}} \bra{v} \Exp_{\theta,r} \Pi^{\theta,r} \ket{v} \leq \frac12 + \epsilon(n) \,,$$
for some exponentially small function $\epsilon$. By Equations \eqref{eq:bound_1}, \eqref{eq:216}, and \eqref{eq:W}, this is equivalent to showing $$\max_{\ket{v}} \,\bra{v} W^{(1)}\ket{v} + \bra{v} W^{(2)}\ket{v} \leq \frac12 + \epsilon(n) \,,$$
for some exponentially small function $\epsilon$.

Let $\ket{v}$ be an arbitrary vector. We can write $\ket{v} = \sum_i \ket{i}_{\sfA}\ket{v_i}_{\sfB\sfC}$, for some un-normalized vectors $\ket{v_i}$ (that are not necessarily orthogonal), such that $\sum_i \| \ket{v_i} \|^2 = 1$. Then, 
\begin{equation}
\label{eq:133}
\bra{v} W^{(1)} \ket{v} = \bra{v} \frac12 I \otimes \Exp_{\theta,r} \sum_b P^{\theta,r}_b \otimes Q^{\theta,r}_b\ket{v} =  \Exp_{\theta,r,b}\sum_i 
\bra{v_i} P^{\theta,r}_b \otimes Q^{\theta,r}_b \ket{v_i} \,.
\end{equation}
We will now derive a similar expression for $\bra{v} W^{(2)} \ket{v}$. Recall the expression for $W^{(2)}$ from \eqref{eq:w2} and \eqref{eq:mij}. For convenience, we rewrite $W^{(2)}$ as 
$$W^{(2)} = \sum_{i, \Delta \in \{0,1\}^n }\ket{i}\bra{i\oplus \Delta} \otimes \widetilde{M}_{i,\Delta} \,,$$
where 
$$\widetilde{M}_{i, \Delta} = M_{i, i\oplus \Delta}  = \frac12 \Exp_{\substack{\theta,r \\ : (\theta, r) \in S}} \delta_{\Delta =  0_Cr_H} \cdot (-1)^{r_C \cdot i_C}  \sum_b (-1)^b P^{\theta,r}_b \otimes Q^{\theta,r}_b \,.$$
We can further rewrite
\begin{align}
\label{eq:m-i-delta-1-1}
\widetilde{M}_{i, \Delta} &= \frac12 \sum_{\substack{\theta',r'\\: \Delta =0_Cr'_H}} \Pr_{\substack{\theta,r \\ : (\theta, r) \in S}}[\theta',r'] \cdot (-1)^{r_C \cdot i_C}  \sum_b (-1)^b P^{\theta,r}_b \otimes Q^{\theta,r}_b \,,
\end{align}
where we are using the shorthand notation $$\Pr_{\substack{\theta,r \\ : (\theta, r) \in S}}[\theta',r'] = \Pr_{\substack{\theta, r \\ : (\theta,r) \in S}}[ (\theta,r) = (\theta',r')] \,.$$
We will keep using similar notation from here on. 

For some fixed $\theta',r'$, we have the following:
\begin{align}
\Pr_{\substack{\theta,r \\ : (\theta, r) \in S}} [ \theta',r']  =& \frac{\Pr_{\theta,r} [(\theta,r) = (\theta',r') \,\land \, (\theta, r) \in S]}{\Pr_{\theta, r}[(\theta,r) \in S]}
\end{align}
As we have calculated earlier, $\Pr_{\theta,r}[(\theta,r) \in S] = 1 - 2^{-n}$.
Thus, we have
\begin{align}
\Pr_{\substack{\theta,r \\ : (\theta, r) \in S}} [\theta',r']  &= \frac{1}{1-2^{-n}} \cdot \Pr_{\theta,r} [(\theta,r) = (\theta',r') \,\land \, (\theta, r) \in S] \nonumber\\
&= \big(1+\delta(n)\big) \cdot \Pr_{\theta,r} [(\theta,r) = (\theta',r') \,\land \, (\theta, r) \in S]\,. \nonumber
\end{align}
where, for convenience of notation, we have defined $\delta(n) = \frac{2^{-n}}{1-2^{-n}}$.
Now, note that 
$$\Pr_{\theta,r} [(\theta,r) = (\theta',r') \,\land \, (\theta, r) \in S] = 
\begin{cases}
\Pr_{\theta, r} [\theta',r'] \quad \textnormal{ if } (\theta',r') \in S \\
0 \quad \textnormal{ if } (\theta',r') \notin S
\end{cases}
$$
So,
\begin{align}
\Pr_{\substack{\theta,r \\ : (\theta, r) \in S}} [\theta',r']  &= 
\begin{cases}
\big(1+\delta(n)\big) \cdot \Pr_{\theta, r} [\theta',r'] \quad \textnormal{ if } (\theta',r') \in S \\
0 \quad \textnormal{ if } (\theta',r') \notin S
\end{cases} \nonumber \\
&=\begin{cases}
\big(1+\delta(n)\big) \cdot 2^{-2n} \quad \textnormal{ if } (\theta',r') \in S \\
0 \quad \textnormal{ if } (\theta',r') \notin S
\end{cases} \nonumber
\end{align}
Substituting the latter into \cref{eq:m-i-delta-1-1} and simplifying (as well as dropping the prime superscripts on $\theta'$ and $r'$), we obtain 
\begin{equation}
\label{eq:m-i-delta-2-2}
\widetilde{M}_{i, \Delta} = \frac12 \cdot 2^{-2n} \cdot \big(1+\delta(n)\big) \sum_{\substack{\theta,r\\: \Delta =0_Cr_H}} \sum_b (-1)^{r_C \cdot i_C + b} \,P^{\theta,r}_b \otimes Q^{\theta,r}_b  \,. \nonumber
\end{equation}
Now, again, consider an arbitrary vector $\ket{v} = \sum_i \ket{i}_{\sfA}\ket{v_i}_{\sfB\sfC}$, for some un-normalized vectors $\ket{v_i}$. We have 
\begin{align}
\bra{v} W^{(2)} \ket{v} &= \sum_{i, \Delta} \bra{v_{i}} \widetilde{M}_{i, \Delta} \ket{v_{i \oplus \Delta}} \nonumber\\
&= \frac12 \cdot 2^{-2n} \cdot \big(1+\delta(n)\big) \sum_{i, \Delta}\sum_{\substack{\theta,r\\: \Delta =0_Cr_H}} \sum_b (-1)^{r_C \cdot i_C + b} \bra{v_{i}} P^{\theta,r}_b \otimes Q^{\theta,r}_b \ket{v_{i\oplus \Delta}} \nonumber\\
&= \frac12 \cdot 2^{-2n} \cdot \big(1+\delta(n)\big) \sum_{i, \theta, r, b} (-1)^{r_C \cdot i_C +b} \bra{v_{i}} P^{\theta,r}_b \otimes Q^{\theta,r}_b \ket{v_{i\oplus (0_Cr_H)}} \nonumber \\
&=\big(1+\delta(n)\big) \cdot  \mathbb{E}_{\theta,r,b} \sum_i (-1)^{r_C \cdot i_C +b} \bra{v_{i}} P^{\theta,r}_b \otimes Q^{\theta,r}_b \ket{v_{i\oplus (0_Cr_H)}} \nonumber \\
&\leq \mathbb{E}_{\theta,r,b} \sum_i (-1)^{r_C \cdot i_C +b} \bra{v_{i}} P^{\theta,r}_b \otimes Q^{\theta,r}_b \ket{v_{i\oplus (0_Cr_H)}} + \delta(n) \,, \label{eq:1666}
\end{align}
where the last line follows by a triangle inequality and Cauchy-Schwarz (and the fact that $\sum_i \|\ket{v_i}\|^2 = 1$). By an analogous calculation, we also have the lower bound 
\begin{equation}
\label{eq:177}
\bra{v} W^{(2)} \ket{v} \geq \mathbb{E}_{\theta,r,b} \sum_i (-1)^{r_C \cdot i_C +b} \bra{v_{i}} P^{\theta,r}_b \otimes Q^{\theta,r}_b \ket{v_{i\oplus (0_Cr_H)}} - \delta(n)
\end{equation}

Thus, putting \eqref{eq:133}, \eqref{eq:1666}, and \eqref{eq:177} together, we have that the problem of showing 
$$p_{opt}(n) \leq \frac12 + \epsilon(n) \,,$$
for some exponentially small function $\epsilon$, is equivalent to proving Conjecture~\ref{conj:1}, stated in the next subsection.

\subsection{A conjecture implying an information-theoretically ``unclonable bit''}
\label{sec:conj_imply_UE}
By combining \eqref{eq:133}, \eqref{eq:1666}, and \eqref{eq:177}, we have that the following conjecture being true is equivalent to the optimal winning probability in the game $\mathsf{GLMonogamy}(n)$ (described at the start of Section~\ref{sec:1}) being exponentially close to $\frac12$ for fully quantum strategies.
\begin{conjecture}
\label{conj:1}
There exists an exponentially small function $\epsilon: \mathbb{N} \rightarrow [0,1]$ such that the following holds. For all $n \in \mathbb{N}$, for all families of projections $\{P^{\theta,r}_b\}_{\theta,r \in \{0,1\}^n, b \in \{0,1\}}$ and $\{Q^{\theta,r}_b\}_{\theta,r \in \{0,1\}^n, b \in \{0,1\}}$, respectively on registers $\sfB$ and $\sfC$ of arbitrary dimension, such that $P^{\theta, r}_0 + P^{\theta,r}_1 = I$ and $Q^{\theta, r}_0 + Q^{\theta, r}_1 = I$ for all $\theta,r$, and for all vectors $\ket{v_i}$,  $i \in \{0,1\}^n$, on $\sfB \sfC$, such that $\sum_i \|\ket{v_i}\|^2 = 1$,
$$  \Exp_{\theta,r,b}\sum_i 
\bra{v_i} P^{\theta,r}_b \otimes Q^{\theta,r}_b \ket{v_i} + \,\mathbb{E}_{\theta,r,b} \sum_i (-1)^{r_C \cdot i_C +b} \bra{v_{i}} P^{\theta,r}_b \otimes Q^{\theta,r}_b \ket{v_{i\oplus (0_Cr_H)}} \leq \frac12 + \epsilon(n) \,,$$
where, for a given $\theta$, we let $C = \{i: \theta_i = 0\}$ and $H = \{i: \theta_i = 1\}$.
\end{conjecture}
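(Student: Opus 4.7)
I split the LHS of Conjecture~\ref{conj:1} into ``diagonal'' and ``cross'' contributions. The diagonal term $\mathbb{E}_{\theta,r,b}\sum_i \bra{v_i}P^{\theta,r}_b\otimes Q^{\theta,r}_b\ket{v_i}$ equals $\tfrac12\mathbb{E}_{\theta,r}\sum_i\bra{v_i}(P_0\otimes Q_0+P_1\otimes Q_1)\ket{v_i}$; since $I_{\sfB\sfC}-P_0\otimes Q_0-P_1\otimes Q_1 = P_0\otimes Q_1 + P_1\otimes Q_0 \succeq 0$ and $\sum_i\|\ket{v_i}\|^2=1$, this is at most $\tfrac12$. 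The entire task is therefore to bound the magnitude of the ``cross'' term, call it $T$, by an exponentially small $\epsilon(n)$.

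\textbf{Operator form.} Setting $\ket{v}_{\sfABC} = \sum_i\ket{i}_\sfA\ket{v_i}_{\sfB\sfC}$ and defining, on register $\sfA$, the Pauli $U^{\theta,r} = X^{(0_Cr_H)}Z^{(r_C 0_H)}$ (which sends $\ket{i\oplus(0_C r_H)}\mapsto(-1)^{r_C\cdot i_C}\ket{i}$), a direct calculation gives $T=\bra{v}W\ket{v}$ with
\[
W \;=\; \tfrac14\,\mathbb{E}_{\theta,r}\Bigl[\,U^{\theta,r}\otimes\bigl(A^{\theta,r}\otimes I_\sfC + I_\sfB\otimes B^{\theta,r}\bigr)\Bigr]\,,
\]
where $A^{\theta,r}=P^{\theta,r}_0-P^{\theta,r}_1$ and $B^{\theta,r}=Q^{\theta,r}_0-Q^{\theta,r}_1$ are $\pm1$-valued observables. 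Since $|T|\le\|W\|$, the task reduces to showing $\|W\|\le\epsilon(n)$ uniformly over all admissible adversaries.

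\textbf{The Pauli/Frobenius approach, and why it settles the semi-classical case.} As $(\theta,r)$ varies, $U^{\theta,r}=X^aZ^b$ ranges over all disjoint-support pairs $(a,b)\in\{0,1\}^n\times\{0,1\}^n$, with each such pair appearing in exactly $2^{n-|a|-|b|}$ of the $4^n$ summands. In the semi-classical regime $A^{\theta,r}=B^{\theta,r}=(-1)^{c(\theta,r)}I$, so $W$ collapses to $\tfrac12\bigl(\mathbb{E}_{\theta,r}(-1)^{c(\theta,r)}U^{\theta,r}\bigr)\otimes I_{\sfB\sfC}$, and the unbounded dimension on $\sfB\sfC$ drops out entirely. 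Bounding the remaining operator norm on $\sfA$ by the Frobenius norm, invoking Pauli trace-orthogonality, and computing $\sum_{a,b\text{ disjoint}}4^{-|a|-|b|}=(3/2)^n$ yields $\|W\|\le\tfrac12(\sqrt3/2)^n$, which would prove \cref{thm:main} with the sharper base $\sqrt3/2\approx 0.866$.

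\textbf{Main obstacle for the fully quantum case.} Once $A^{\theta,r},B^{\theta,r}$ are arbitrary bounded observables on registers of unbounded dimension, the Frobenius bound collapses, since $\|A^{\theta,r}\|_F$ scales with $\sqrt{\dim\sfB}$. The natural substitute is an operator-valued moment method: expand $WW^\dagger$, use Pauli trace-orthogonality on $\sfA$ to kill cross terms with $(\theta,r)\neq(\theta',r')$, and control the surviving products $A^{\theta,r}A^{\theta',r'}$, $B^{\theta,r}B^{\theta',r'}$ via their $\pm1$ bound. Unfortunately, naive applications of the operator Cauchy--Schwarz inequality $\|\mathbb{E}\,X_\omega Y_\omega\|^2\le\|\mathbb{E}\,X_\omega X_\omega^\dagger\|\cdot\|\mathbb{E}\,Y_\omega^\dagger Y_\omega\|$ discard the Pauli cancellation entirely and give only the trivial $\|W\|=O(1)$. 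The real difficulty---and, I suspect, the true reason this statement is phrased as a conjecture rather than a theorem---is that a clever adversary can \emph{entangle} $\sfB$ with $\sfC$ so that the operator-valued coefficients $A^{\theta,r}\otimes I + I\otimes B^{\theta,r}$ conspire constructively across different $(\theta,r)$, undoing the cancellation delivered by the Pauli average on $\sfA$. Circumventing this would plausibly require leveraging the POVM completeness $P^{\theta,r}_0+P^{\theta,r}_1=I$ jointly over $(\theta,r)$ via some ``twirling'' or randomized-compilation identity that effectively forces any coherent strategy to behave as a classical mixture. Since such a reduction would immediately yield an information-theoretically unclonable bit, I expect this to be the genuinely hard step, and a proof may well require tools outside the Fourier/moment toolkit---for example, bipartite entropic uncertainty relations adapted to non-commuting $\sfB$ and $\sfC$ observables, or an SDP-hierarchy (NPA-style) analysis in the spirit of Botteron et al.
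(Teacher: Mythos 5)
The statement you were asked to prove is a \emph{conjecture} in the paper: the fully quantum case is left open, and the paper establishes only the semi-classical restriction (\cref{thm:main}). You correctly recognize this, and your partial argument (diagonal term plus semi-classical cross term) proves exactly what the paper actually proves. Your route to the semi-classical bound is genuinely different and sharper. After the clean operator rewriting $T = \bra{v}W\ket{v}$ with $W=\tfrac14\,\mathbb{E}_{\theta,r}\bigl[U^{\theta,r}\otimes(A^{\theta,r}\otimes I + I\otimes B^{\theta,r})\bigr]$, $U^{\theta,r}=X^{0_C r_H}Z^{r_C 0_H}$, $A^{\theta,r}=P^{\theta,r}_0-P^{\theta,r}_1$, $B^{\theta,r}=Q^{\theta,r}_0-Q^{\theta,r}_1$ (which uses POVM completeness via the identity $P_0\otimes Q_0 - P_1\otimes Q_1 = \tfrac12(A\otimes I + I\otimes B)$), you expand the $\sfA$-part of $W$ directly in the Pauli basis and bound the operator norm by the Frobenius norm, evaluated in closed form via Pauli trace-orthogonality. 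The paper instead works in the matrix-entry basis $(i,\Delta)$, applies \cref{lem:operator-bound-1}, inserts a triangle inequality $\bigl|\sum_\theta\cdot\bigr|\le\sum_\theta\bigl|\cdot\bigr|$, then Cauchy--Schwarz, then Parseval (\cref{lem:parseval}). Your grouping by disjoint $(a,b)$ with $|s_{a,b}|\le 2^{n-|a|-|b|}$ retains the cancellation across the $2^{n-|a|-|b|}$ preimages of each Pauli $X^aZ^b$, whereas the paper's triangle inequality over $\theta$ discards it; this is what buys you the sharper base $\sqrt3/2\approx 0.866$ versus the paper's $\approx 0.924$, and it also dispenses with the conditioning on $S$ and the accompanying $(1+\delta(n))$ factor. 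Your diagnosis of why the fully quantum case resists this method---the Frobenius bound on $\sfB\sfC$ scales with its (unbounded) dimension, and operator Cauchy--Schwarz throws away the Pauli cancellation on $\sfA$---is accurate and consistent with the paper's own framing of the statement as a conjecture rather than a theorem.
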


In the next section, we prove this conjecture for ``semi-classical'' strategies (recall that we defined ``semi-classical'' strategies in \cref{sec:1}).

\subsection{Continuing the analysis for semi-classical strategies}
\label{sec:analysis-unentangled}
To prove Conjecture \ref{conj:1} for semi-classical strategies, we proceed in two steps.
\begin{itemize}
    \item[1)] We will first show that $$\max_{\ket{v_i}: \sum_i \| \ket{v_i}\|^2 = 1} \,\,\Exp_{\theta,r,b}\sum_i 
\bra{v_i} P^{\theta,r}_b \otimes Q^{\theta,r}_b \ket{v_i} \leq \frac12$$.
    \item[2)] We will then show that $$ \max_{\ket{v_i}: \sum_i \| \ket{v_i}\|^2 = 1} \,\,\mathbb{E}_{\theta,r,b} \sum_i (-1)^{r_C \cdot i_C +b} \bra{v_{i}} P^{\theta,r}_b \otimes Q^{\theta,r}_b \ket{v_{i\oplus (0_Cr_H)}}$$ is exponentially small in $n$.
\end{itemize}
Step 1 is straightforward, and does not use the hypothesis that the strategy is ``semi-classical''. We have
\begin{align}
 \max_{\ket{v_i}: \sum_i \| \ket{v_i}\|^2 = 1} \,\,\Exp_{\theta,r,b}\sum_i 
\bra{v_i} P^{\theta,r}_b \otimes Q^{\theta,r}_b \ket{v_i} 
&= \frac12 \Exp_{\theta,r} \bra{v_i} \sum_b P^{\theta,r}_b \otimes Q^{\theta,r}_b \ket{v_i}  \\
&\leq \frac12 \Exp_{\theta,r} \bra{v_i} \sum_b P^{\theta,r}_b \otimes Q^{\theta,r}_b \ket{v_i} \\
&\leq \frac12 \Exp_{\theta,r} \sum_i \| \ket{v_i}\|^2 \\ 
&\leq \frac12 \,, 
\label{eq:final-1}
\end{align}
where the second inequality follows from the fact that $\sum_b P^{\theta,r}_b \otimes Q^{\theta,r}_b$ is a projection, along with one application of Cauchy-Schwarz.

We now proceed to step 2. From \cref{eq:177}, recall that
$$ \max_{\ket{v_i}: \sum_i \| \ket{v_i}\|^2 = 1} \,\,\mathbb{E}_{\theta,r,b} \sum_i (-1)^{r_C \cdot i_C +b} \bra{v_{i}} P^{\theta,r}_b \otimes Q^{\theta,r}_b \ket{v_{i\oplus (0_Cr_H)}}  \leq \max_{\ket{v}} \bra{v} W^{(2)} \ket{v} + \delta(n) \,,$$
where
$$W^{(2)} = \sum_{i, \Delta \in \{0,1\}^n }\ket{i}\bra{i\oplus \Delta} \otimes \widetilde{M}_{i,\Delta} \,,$$
$$
\widetilde{M}_{i, \Delta} = \frac12 \cdot 2^{-2n} \cdot \big(1+\delta(n)\big) \sum_{\substack{\theta,r\\: \Delta =0_Cr_H}} \sum_b (-1)^{r_C \cdot i_C + b} \,P^{\theta,r}_b \otimes Q^{\theta,r}_b  \,,
$$
and $\delta(n) = \frac{2^{-n}}{1-2^{-n}}$.

We will now use the hypothesis that the strategy is semi-classical. Thus, there exists a function $c$ such that, for all $\theta,r$, $$P_b^{\theta,r} = Q_b^{\theta,r} = \delta_{b = c(\theta,r)} \cdot I\,.$$
Thus, we can write
\begin{equation}
\label{eq:m-i-delta-2}
\widetilde{M}_{i, \Delta} = \frac12 \cdot 2^{-2n} \cdot \big(1+\delta(n)\big) \sum_{\substack{\theta,r\\: \Delta =0_Cr_H}} (-1)^{r_C \cdot i_C+c(\theta,r)} I \otimes I \,.
\end{equation}
 Notice that $W^{(2)}$ is Hermitian, so $\max_{\ket{v}} \bra{v}  W^{(2)} \ket{v} = \|W^{(2)}\|$. We will aim to bound $\| W^{(2)}\|$ by leveraging the following lemma.
\begin{lem}
\label{lem:operator-bound-1}
    Let $A = \sum_{i,j} \ket{i}\bra{j} \otimes M_{ij}$ where the $M_{ij}$ are any linear operators, and the $\ket{i}$ vectors form an orthonormal basis. Then, 
    $$\|A \| \leq \Big(\sum_{i,j} \|M_{ij} \|^2\Big)^{\frac12}$$
    (where $\| A\|$ is the operator norm of $A$).
\end{lem}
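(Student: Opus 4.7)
The plan is to bound the operator norm directly by evaluating $\|A\ket{v}\|$ on an arbitrary unit vector $\ket{v}$ and then applying the triangle inequality followed by Cauchy--Schwarz in a standard way.

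First I would decompose an arbitrary unit vector as $\ket{v} = \sum_{j} \ket{j} \ket{v_j}$, where the $\ket{v_j}$ are (unnormalized) vectors in the second tensor factor satisfying $\sum_j \|\ket{v_j}\|^2 = 1$. Applying $A$ gives
\begin{equation*}
A\ket{v} \;=\; \sum_{i} \ket{i} \otimes \Bigl(\textstyle\sum_{j} M_{ij} \ket{v_j}\Bigr),
\end{equation*}
so by orthonormality of the $\ket{i}$,
\begin{equation*}
\|A\ket{v}\|^2 \;=\; \sum_{i} \Bigl\|\textstyle\sum_j M_{ij}\ket{v_j}\Bigr\|^2.
\end{equation*}

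Second, I would bound each inner norm by combining the triangle inequality, the definition of the operator norm $\|M_{ij}\ket{v_j}\| \le \|M_{ij}\|\cdot\|\ket{v_j}\|$, and Cauchy--Schwarz:
\begin{equation*}
\Bigl\|\textstyle\sum_j M_{ij}\ket{v_j}\Bigr\|^2 \;\le\; \Bigl(\textstyle\sum_j \|M_{ij}\|\,\|\ket{v_j}\|\Bigr)^2 \;\le\; \Bigl(\textstyle\sum_j \|M_{ij}\|^2\Bigr)\Bigl(\textstyle\sum_j \|\ket{v_j}\|^2\Bigr).
\end{equation*}
Summing over $i$ and using $\sum_j \|\ket{v_j}\|^2 = 1$ then yields $\|A\ket{v}\|^2 \le \sum_{i,j}\|M_{ij}\|^2$. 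Taking the square root and the maximum over unit $\ket{v}$ gives the claimed bound.

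There is no real obstacle here; the statement is a routine ``block-matrix'' operator-norm inequality, and the only subtlety is making sure to use the bound $\|M_{ij}\ket{v_j}\| \le \|M_{ij}\|\,\|\ket{v_j}\|$ (so that each $\|M_{ij}\|$ appears as the operator norm, not the Frobenius norm) before applying Cauchy--Schwarz with the weights $\|\ket{v_j}\|$. A Frobenius-type identity such as $\|A\|_F^2 = \sum_{i,j}\|M_{ij}\|_F^2$ would give a weaker bound, so the two-step triangle-then-Cauchy--Schwarz route is the right one to prove the stated inequality.
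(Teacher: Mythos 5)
Your proof is correct and is essentially identical to the paper's: both decompose $\ket{v} = \sum_j \ket{j}\otimes\ket{v_j}$, use orthonormality to get $\|A\ket{v}\|^2 = \sum_i\|\sum_j M_{ij}\ket{v_j}\|^2$, and then apply the triangle inequality, the bound $\|M_{ij}\ket{v_j}\|\le\|M_{ij}\|\|\ket{v_j}\|$, and Cauchy--Schwarz in the same order.
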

\begin{proof}
Recall that $\| A\| = \max_{\ket{v}: \|\ket{v}\| = 1} \|A \ket{v} \|$. This implies that $\|A\|^2 = \max_{\ket{v}: \| v\| = 1} \| A \ket{v} \|^2$. Now take any unit norm $\ket{v}$. We can write it as $\ket{v} = \sum_{j} \ket{j} \otimes \ket{v_j}$ for some (un-normalized) vectors $\ket{v_j}$ such that $\sum_j \|\ket{v_j}\|^2 =1$. Then, 
\begin{align}
\|A \ket{v} \|^2 &= \Big\| \sum_{i} \sum_{j} \ket{i} \otimes M_{ij} \ket{v_j} \Big\|^2 \nonumber\\
&= \Big\| \sum_{i}  \ket{i} \otimes \Big( \sum_j M_{ij} \ket{v_j} \Big) \Big\|^2  \nonumber\\
&= \sum_i \Big\| \sum_j M_{ij} \ket{v_j} \Big\|^2 \nonumber\\
&\leq \sum_i \big(\sum_j \| M_{ij} \ket{v_j} \| \big)^2 \nonumber\\
&\leq \sum_i \big( \sum_j \|M_{ij} \| \cdot \| \ket{v_j} \| \big)^2 \nonumber\\
&\leq \sum_i \big( \sum_j \| M_{ij}\|^2 \big) \cdot \big(\sum_j \|\ket{v_j} \|^2 \big) \nonumber\\
&= \sum_{ij} \| M_{ij}\|^2 \,, \nonumber
\end{align}
where the first inequality follows by a triangle inequality; the second inequality follows from the fact that $ \|M v\| \leq \|M \| \cdot \| \ket{v}\|$ for all operators $M$ and vectors $\ket{v}$; the third inequality follows by an application of Cauchy-Schwarz; and the last equality follows from the fact that $\sum_j \|\ket{v_j}\|^2 =1$.
\end{proof}

Applying \cref{lem:operator-bound-1}, we have 
\begin{equation}
\label{eq:155}
\|W^{(2)}\| \leq \Big( \sum_{i, \Delta} \|\widetilde{M}_{i, \Delta} \|^2 \Big)^{\frac12} \,.
\end{equation}
Now, fix some $i$ and $\Delta$. Then, from \cref{eq:m-i-delta-2}, we have
\begin{align}
\| \widetilde{M}_{i, \Delta} \| &= \frac12 \cdot 2^{-2n} \cdot \big(1+\delta(n)\big) \cdot \left |  \sum_{\substack{\theta,r\\: \Delta =0_Cr_H}} (-1)^{r_C \cdot i_C+c(\theta,r)} \right | \nonumber\\
&= \frac12 \cdot 2^{-2n} \cdot \big(1+\delta(n)\big) \cdot \left | \sum_{\theta: \Delta_C = 0_C}  \sum_{r_C} (-1)^{r_C \cdot i_C + c(\theta, r_C\Delta_H)}  \right | \nonumber\\
&\leq \frac12 \cdot 2^{-2n} \cdot \big(1+\delta(n)\big)\sum_{\theta: \Delta_C = 0_C} \left | \sum_{r_C} (-1)^{r_C \cdot i_C + c(\theta, r_C\Delta_H)} \right | \nonumber \,,
\end{align}
where the last line is by a triangle inequality. 

Now, combining the latter with \cref{eq:155}, we have
\begin{align}
\|W^{(2)}\|^2 &\leq \sum_{i, \Delta} \|\widetilde{M}_{i, \Delta} \|^2 \nonumber\\
&\leq  \frac14 \cdot 2^{-4n} \cdot \big(1+\delta(n)\big)^2 \sum_{i,\Delta}\left (  \sum_{\theta: \Delta_C = 0_C} \bigg |  \sum_{r_C} (-1)^{r_C \cdot i_C + c(\theta, r_C\Delta_H)} \bigg| \right )^2 \nonumber\\
&=  \frac14 \cdot 2^{-4n} \cdot \big(1+\delta(n)\big)^2 \sum_{\Delta} \sum_{\substack{\theta: \Delta_C = 0_C \\ \theta': \Delta_{C'} = 0_{C'}}} \sum_i \bigg|  \sum_{r_C} (-1)^{r_C \cdot i_C + c(\theta, r_C\Delta_H)}  \bigg| \cdot \bigg|  \sum_{r'_{C'}} (-1)^{r'_{C'} \cdot i_{C'} + c(\theta', r'_{C'}\Delta_{H'})} \bigg| \,, \nonumber
\end{align}
where in the latter expression, we used $C'$ and $H'$ to denote the subset of indices where $\theta'$ is respectively $0$ and $1$. Now, by Cauchy-Schwarz, we have
\begin{align}
\|W^{(2)}\|^2 &\leq \frac14 \cdot 2^{-4n} \cdot \big(1+\delta(n)\big)^2 \sum_{\Delta} \sum_{\substack{\theta: \Delta_C = 0_C \\ \theta': \Delta_{C'} = 0_{C'}}} \sqrt{\sum_i \bigg(  \sum_{r_C} (-1)^{r_C \cdot i_C + c(\theta, r_C\Delta_H)} \bigg)^2} \nonumber \\
&\quad\quad \quad \quad \quad\quad \quad \quad \quad \quad \quad \quad \quad \quad \quad \quad \quad \cdot \sqrt{\sum_{i'} \bigg(  \sum_{r'_{C'}} (-1)^{r'_{C'} \cdot i'_{C'} + c(\theta', r'_{C'}\Delta_{H'})} \bigg)^2} \nonumber\\
&= \frac14 \cdot 2^{-4n} \cdot \big(1+\delta(n)\big)^2 \sum_{\Delta} \sum_{\substack{\theta: \Delta_C = 0_C \\ \theta': \Delta_{C'} = 0_{C'}}} \sqrt{\sum_{i_{H}} \sum_{i_{C}} \bigg(  \sum_{r_C} (-1)^{r_C \cdot i_C + c(\theta, r_C\Delta_H)} \bigg)^2} \nonumber \\
&\quad\quad \quad \quad \quad\quad \quad \quad \quad \quad \quad \quad \quad \quad \quad \quad \quad \cdot \sqrt{\sum_{i'_{H'}} \sum_{i'_{C'}} \bigg(  \sum_{r'_{C'}} (-1)^{r'_{C'} \cdot i'_{C'} + c(\theta', r'_{C'}\Delta_{H'})} \bigg)^2} \,, \label{eq:166}
\end{align}
where in the last equality we just broke up the two sums over $i$ into sums over $i_H$ and $i_C$, and over $i'_{H'}$ and $i'_{C'}$ respectively, as this will help us shortly. 

The crucial step now is to invoke Parseval's identity (which follows immediately from the unitarity of the Fourier transform), in the following form.
\begin{lem}[Parseval's identity]
\label{lem:parseval}
Let $N \in \mathbb{N}$. Let $f: \{0,1\}^N \rightarrow \mathbb{R}$. Then, 
$$\sum_{i \in \{0,1\}^N}  \bigg (
 \sum_{r \in \{0,1\}^N} (-1)^{i \cdot r} \cdot f(r) \bigg)^2 = 2^n \sum_{r \in \{0,1\}^N} f(r)^2 $$
\end{lem}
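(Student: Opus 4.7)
The plan is to prove the identity directly by expanding the square and invoking the orthogonality of the characters $r \mapsto (-1)^{i \cdot r}$ of the group $\{0,1\}^N$. This is essentially the standard Parseval/Plancherel identity for the Boolean Fourier transform, so there is no real obstacle, just a careful bookkeeping of constants.

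First, I would expand the square on the left-hand side as a double sum,
\begin{align*}
\sum_{i \in \{0,1\}^N} \Bigl(\sum_{r \in \{0,1\}^N} (-1)^{i \cdot r} f(r)\Bigr)^2
= \sum_{r, s \in \{0,1\}^N} f(r) f(s) \sum_{i \in \{0,1\}^N} (-1)^{i \cdot (r \oplus s)},
\end{align*}
swapping the order of summation (which is fine, as the sums are finite). The only fact needed now is the character orthogonality relation $\sum_{i \in \{0,1\}^N} (-1)^{i \cdot z} = 2^N \cdot \mathbf{1}[z = 0^N]$, which I would prove in one line by factoring the sum coordinatewise: for each bit, the one-dimensional sum is $2$ if the bit of $z$ is $0$ and $0$ otherwise, so the product is $2^N$ exactly when $z = 0^N$. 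Applied with $z = r \oplus s$, this kills every term except $r = s$, leaving $2^N \sum_r f(r)^2$, which matches the stated right-hand side (up to what appears to be a typo $2^n$ instead of $2^N$).

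As an alternative framing, essentially the same proof can be obtained from unitarity of the normalized Hadamard transform $\tfrac{1}{2^{N/2}} H^{\otimes N}$: viewing $f$ as a vector in $\mathbb{R}^{2^N}$, the inner sum $\sum_r (-1)^{i \cdot r} f(r)$ is precisely $2^{N/2}$ times the $i$-th coordinate of the Hadamard transform of $f$, and Parseval's identity is then equivalent to the isometry property of $H^{\otimes N}$. I would likely still present the direct expansion above, since it is self-contained and avoids invoking any additional machinery. The only ``risk'' here is the $2^n$ vs.\ $2^N$ discrepancy in the statement, which I would flag and correct rather than try to reconcile.
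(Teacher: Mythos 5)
Your proof is correct. The paper does not actually give a proof of this lemma; it simply states that it ``follows immediately from the unitarity of the Fourier transform,'' which is exactly the alternative framing you describe (isometry of $2^{-N/2}H^{\otimes N}$). Your primary argument, expanding the square and applying the character orthogonality relation $\sum_{i}(-1)^{i\cdot z}=2^N\mathbf{1}[z=0^N]$, is the standard elementary proof of the same fact and is entirely sound; it has the minor virtue of being self-contained. You are also right that the $2^n$ on the right-hand side is a typo for $2^N$.
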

Continuing from \eqref{eq:166}, we now apply Lemma \ref{lem:parseval} twice: once for the sum over $i_C$ with $N = |C|$, and $f(r_C) = (-1)^{c(\theta, r_{C}\Delta_{H})}$, and once for the sum over $i'_{C'}$ with $N = |C'|$, and $f(r'_{C'}) = (-1)^{c(\theta', r'_{C'}\Delta_{H'})}$. We obtain
\begin{align}
\|W^{(2)}\|^2 &\leq \frac14 \cdot 2^{-4n} \cdot \big(1+\delta(n)\big)^2 \sum_{\Delta} \sum_{\substack{\theta: \Delta_C = 0_C \\ \theta': \Delta_{C'} = 0_{C'}}} \sqrt{\sum_{i_H} 2^{|C|} \sum_{r_C} 1} \cdot \sqrt{\sum_{i'_{H'}} 2^{|C'|} \sum_{r'_{C'}} 1} \nonumber \\
&= \frac14 \cdot 2^{-4n} \cdot \big(1+\delta(n)\big)^2 \sum_{\Delta} \sum_{\theta: \Delta_C = 0_C} \sqrt{2^{|H| + 2|C|}} \cdot \sum_{\theta': \Delta_{C'} = 0_{C'}} \sqrt{2^{|H'| + 2|C'|}} \nonumber \\
&= \frac14 \cdot 2^{-4n} \cdot \big(1+\delta(n)\big)^2 \sum_{k =0}^n \,\,\,\sum_{\Delta: |\overline{\Delta}| = k} \,\,\, \sum_{j=0}^k \sum_{\theta: \Delta_C = 0_C \,\, \land \,\, |\overline{\theta}| = j} \sqrt{2^{n-j+2j}} \sum_{j'=0}^k \sum_{\theta': \Delta_{C'} = 0_{C'} \,\, \land \,\, |\overline{\theta'}| = j'} \sqrt{2^{n-j'+2j'}} \nonumber \\
&= \frac14 \cdot 2^{-3n} \cdot \big(1+\delta(n)\big)^2 \sum_{k =0}^n \binom{n}{k} \sum_{j=0}^k \,\,\,\sum_{\theta: \Delta_C = 0_C \,\, \land \,\, |\overline{\theta}| = j} 2^{j/2} \sum_{j'=0}^k \,\,\,\sum_{\theta': \Delta_{C'} = 0_{C'} \,\, \land \,\, |\overline{\theta'}| = j'} 2^{j'/2} \nonumber  \\
&= \frac14 \cdot 2^{-3n} \cdot \big(1+\delta(n)\big)^2 \sum_{k =0}^n \binom{n}{k} \sum_{j=0}^k \binom{k}{j} \cdot 2^{j/2} \sum_{j'=0}^k \binom{k}{j'} \cdot 2^{j'/2} \nonumber 
\end{align}
Now, using the binomial identity $\sum_{l=0}^m \binom{m}{l} x^l = (x+1)^m$ for all $m \in \mathbb{N}$ and $x \in \mathbb{R}$, we can simplify the above further to obtain
\begin{align}
\|W^{(2)}\|^2 & \leq \frac14 \cdot 2^{-3n} \cdot \big(1+\delta(n)\big)^2 \sum_{k =0}^n \binom{n}{k} (\sqrt{2}+1)^{2k} \nonumber \\
&= \frac14 \cdot 2^{-3n} \cdot \big(1+\delta(n)\big)^2 \cdot \Big((\sqrt{2}+1)^2 + 1\Big)^n \nonumber \\
&\leq  \frac14 \cdot \big(1+\delta(n)\big)^2 \cdot \Big(\frac{6.83}{8}\Big)^n \,. \nonumber
\end{align}
Taking square roots on both sides, we have 
\begin{equation} 
\label{eq:final-2}
\|W^{(2)}\| \leq \frac12 \cdot \big(1+\delta(n)\big) \cdot \Big(\frac{6.83}{8}\Big)^{n/2} \leq  \frac12 \cdot \big(1+\delta(n)\big) \cdot 0.93^n \,.
\end{equation}
Let $p_{opt\textnormal{-}semi\textnormal{-}classical}(n)$ denote the optimal winning probability for semi-classical strategies. Putting everything together from Equations \eqref{eq:final-1} and \eqref{eq:final-2}, we have that, for semi-classical strategies,
$$p_{opt\textnormal{-}semi\textnormal{-}classical}(n) \leq \frac12 + \max_{\ket{v}} \bra{v} W^{(2)}\ket{v}  \leq \frac12 +  \frac12 \cdot \big(1+\delta(n)\big) \cdot 0.93^n \,.$$
Now, recall that we defined $\delta(n) = \frac{2^{-n}}{1-2^{-n}} \leq 1$, for all $n\geq 1$. Thus, for semi-classical strategies,
$$p_{opt\textnormal{-}semi\textnormal{-}classical}(n) \leq \frac12 + 0.93^n \,.$$


\bibliographystyle{alpha}
\bibliography{references}

\end{document}